\newtheorem{theorem}{Theorem}
\newtheorem{corollary}{Corollary}
\providecommand{\customgenericname}{}
\newcommand{\newcustomtheorem}[2]{%
  \newenvironment{#1}[1]
  {%
   \renewcommand\customgenericname{#2}%
   \renewcommand\theinnercustomgeneric{##1}%
   \innercustomgeneric
  }
  {\endinnercustomgeneric}
}
\theoremstyle{definition}
\newtheorem{definition}{Definition}
 \newtheorem{example}{Example}
  \newtheorem*{example*}{Example}
 \newtheorem{remark}{Remark}
 \newtheorem*{remark*}{Remark}
\newcommand{\Ie}{\textit{i.e., }}
\newcommand{\Eg}{\textit{e.g., }}
\newcommand{\cA}{\mathcal{A}}
\newcommand{\cG}{\mathcal{G}}
\newcommand{\cM}{\mathcal{M}}
\newcommand{\cO}{\mathcal{O}}
\newcommand{\cQ}{\mathcal{Q}}
\newcommand{\cV}{\mathcal{V}}
\newcommand{\cX}{\mathcal{X}}
\newcommand{\bS}{\mathbf{S}}
\newcommand{\ep}{\epsilon}
\newcommand{\Er}{\mathcal{E}}
\newcommand{\hbS}{\hat{\mathbf{S}}}
\newcommand{\E}{\mathbbm{E}}
\newcommand{\QA}{\textnormal{QA}}
\newcommand{\RG}{\textnormal{RG}}
\newcommand{\MSE}{\textnormal{MSE}}
\newcommand{\utag}[2]{\mathop{#2}\limits^{\text{(#1)}}}
\newcommand{\uref}[1]{(#1)}
\definecolor{DarkGreen}{rgb}{0.1,0.5,0.1}
\definecolor{DarkRed}{rgb}{0.5,0.1,0.1}
\definecolor{DarkBlue}{rgb}{0.1,0.1,0.5}
\definecolor{DarkPurple}{rgb}{0.5,0.2,0.5}
\definecolor{DarkTurquoise}{rgb}{0.1,0.5,0.5}
\definecolor{darkorchid}{rgb}{0.6, 0.2, 0.8}
\definecolor{americanrose}{rgb}{1.0, 0.01, 0.24}
\definecolor{darkpastelgreen}{rgb}{0.01, 0.75, 0.24}
\definecolor{bleudefrance}{RGB}{97, 142, 206}
\definecolor{brightlavender}{RGB}{223, 196, 236}
\definecolor{palegreen}{RGB}{200, 240, 208}
\definecolor{dimgray}{rgb}{0.41, 0.41, 0.41}
\definecolor{fireenginered}{rgb}{0.81, 0.09, 0.13}
\definecolor{denim}{rgb}{0.08, 0.38, 0.74}
\definecolor{verylightgray}{rgb}{0.75, 0.75, 0.75}
\definecolor{lavendergray}{RGB}{218,232,229}
\definecolor{verylightgreen}{RGB}{247, 254, 247}
\definecolor{verylightblue}{RGB}{249, 251, 254}
\definecolor{verylightred}{RGB}{254, 247, 247}
\definecolor{grannysmithapple}{RGB}{178, 249, 176}
\definecolor{lavenderpurple}{RGB}{242, 183, 234}
\algnewcommand\algorithmicforeach{\textbf{for each}}
\begin{document}

\title{Private Multi-Group Aggregation
\\ \thanks{This work was  presented
in part at the IEEE International Symposium on Information Theory 2021.}
}

\author{\IEEEauthorblockN{Carolina Naim\IEEEauthorrefmark{1}, Rafael G. L. D'Oliveira\IEEEauthorrefmark{2}, and Salim El Rouayheb\IEEEauthorrefmark{1}\\}
\IEEEauthorblockA{\IEEEauthorrefmark{1}ECE, Rutgers University, \{carolina.naim, salim.elrouayheb\}@rutgers.edu\\ 
\IEEEauthorrefmark{2}RLE, Massachusetts Institute of Technology, rafaeld@mit.edu \\
}
}

\maketitle
\pagenumbering{gobble}
\begin{abstract}
We study the differentially private multi group aggregation (PMGA) problem. This setting involves a single server and $n$ users. Each user belongs to one of $k$ distinct groups and holds a discrete value. The goal is to design schemes that allow the server to find the aggregate (sum) of the values in each group (with high accuracy) under communication and local differential privacy constraints. The privacy constraint guarantees that the user's group remains private. This is motivated by applications where a user's group can reveal sensitive information, such as his religious and political beliefs, health condition, or race.  

We propose a novel scheme, dubbed Query and Aggregate (Q\&A) for PMGA. The novelty of Q\&A is that it is an interactive aggregation scheme. In Q\&A,  each user is assigned a random query matrix, to which he sends the server an answer based on his group and value. We characterize the Q\&A scheme's performance in terms of accuracy (MSE), privacy, and communication. We compare Q\&A to the Randomized Group (RG) scheme, which is non-interactive and adapts existing randomized response schemes to the PMGA setting. We observe that typically Q\&A outperforms RG, in terms of  privacy vs. utility, in the high privacy regime. 
\end{abstract}

\begin{IEEEkeywords}
Differential privacy, data privacy, estimation.
\end{IEEEkeywords}

\section{Introduction}

We consider the problem of distributed aggregation in which a centralized server wishes to compute the aggregate (sum) of the data (values) held by several users. 
Privacy is a significant concern since   participants have to share their data, which can be personal and sensitive. This has motivated works on private and secure distributed aggregation  in many applications such as medical studies \cite{Kim_2014} or, more recently, federated learning \cite{bonawitz_2016, Abadi_2016,Truex_2020, Kim_2021}.

In this work, we focus on the setting depicted in Figure~\ref{fig:setting}, in which users belong to different groups. The server wants to find the aggregate for each group separately. As opposed to finding the aggregate over the whole population, as is typical distributed aggregation problems,  \Eg \cite{bonawitz_2016, Chan_2012}. The users' groups can be based, for example, on their political views, immigration status, health condition, or race,  to name a few. This raises additional privacy concerns since participating users may be rightfully wary of revealing their group.

Consider, for example, medical or clinical trials conducted to determine how having a certain illness, say diabetes, affects the    efficacy of a new vaccine.  A volunteer may want to contribute his vaccine test results, but does not want to reveal his medical condition (diabetes), \Ie the group he belongs to.
Another application is population polling during elections, where pollsters want to estimate how different groups of the population vote. Such groups could depend on race, gender, age, or income bracket. 
The poll participants want to indicate which political candidate they will vote for while keeping their group private.


 Motivated by these examples, we present the problem of Private Multi-Group Aggregation (PMGA), where local differential privacy \cite{Dwork_2006,Duchi_2013} guarantees are given over a user's group. We are interested in schemes that scale well with the number of groups since more groups allow the server more refined statistics about the population.  Our main objective is to design schemes with low communication costs per user, as users can have limited bandwidth. In particular, we focus on schemes that offer communication costs that are constant or at most logarithmic in the number of groups.
 Moreover, we study the trade-offs they offer between privacy (measured using local differential privacy) and accuracy, \Ie the aggregate estimator's mean square error.
 

\begin{figure}[t]
    \centering
    \scalebox{0.6}{
%
         \caption{\centering  $p_1(1)=0.9$ and $p_2(1)=0.01$. }
         \label{fig:MSE_vs_ep3}
     \end{subfigure}
     
        \caption{Comparison of the Q\&A and RG schemes for $k=2$ groups, binary alphabet, \Ie $v\in\{-1,1\}$, and fixed total communication cost equal to $500$ bits, \Ie $500$ bits communicated by all the users to the server. The Q\&A scheme requires less communication cost per user compared to the RG scheme; therefore, for fixed total communication cost, the Q\&A scheme has more users.
        The subfigures (a), (b) and (c) illustrate accuracy vs. privacy of the Q\&A and RG schemes for different user's value distributions, $p_1(1)$ and $p_2(1)$. The dashed (or dotted) curves represent the performance of the schemes with an additional layer of privacy that hides the user's values, \Ie $\lambda>0$ for the Q\&A scheme, and $\lambda_{vl}>0$ for the RG scheme. We present a more detailed comparison is in Section \ref{sec:comparisons}.}
        \label{fig:comparisons}
\end{figure*}

\subsection{Related Work}

The classical setup for secure and private aggregation in the literature does not distinguish among groups, and the privacy guarantees are on the users' data (values).
Differentially private schemes and bounds for private aggregation were studied in \cite{Chan_2012,Ghazi_2020,Goryczka_2017,Truex_2019,Shi_2011}.
  In \cite{bonawitz_2016}, secure aggregation based on information-theoretic (secret sharing) and cryptographic techniques was developed for applications to federated learning (FL) \cite{McMahan_2017}. Secure aggregation algorithms for FL with improved communication and computation overhead were proposed in \cite{So_2020,kadhe_2020}, and with robustness against adversarial users in \cite{Pillutla_2019}. 
{These  schemes have a  per-user communication cost  that grows with the number of  users.}

{Although in this paper we focus on estimating the sum, other works have focused on various estimation problems.}
For instance, distributed empirical mean estimation under communication constraints has been looked at in \cite{Konecny_2018,suresh_2017}.  Beyond estimating the mean, discrete distribution estimation under communication constraints has been studied in \cite{Barnes_2019,Diakonikolas_2017}, and under privacy constraints in \cite{Kairouz_2016,Ye_2017, Diakonikolas_2015, Acharya_2019, Wang_2016, Erlingsson_2014}. Moreover, heavy hitters  (most frequent items)  estimation has also been studied in \cite{Acharya_2019_2,Zhu_2019} under privacy and communication constraints. Recent work in \cite{Chen_2020} devises schemes that achieve optimal privacy and communication for  mean  and frequency estimation. 

Another related problem is federated submodel learning \cite{Niu_2019,kim_2020,Jia_2020}. In this setting, one or  multiple servers  hold  various submodels (vectors) and each user wants to train (update) a private subset of these submodels. {The notion of submodels here is similar to the notion of groups in our problem; however, a user's update depends on the submodels at the server in addition to his data.} The proposed solutions in \cite{kim_2020,Jia_2020} use information-theoretic private information retrieval (PIR) to privately download and update the submodels. Thus, they require multiple servers, and the communication  cost per user is linear in the number of submodels (groups). 
Moreover, in \cite{Niu_2019}  differentially private techniques  were used to allow a user to download the required submodels, and update it using secure aggregation.
The resulting scheme has a communication cost per user that grows with the total number of  users.

\subsection{Contributions}
 {We introduce the problem of private multi-group aggregation (Figure \ref{fig:setting}), where $n$ users communicate with a central server. Each user holds a value and belongs to a private group.
 The goal is for the server to accurately compute the sum of values per group while keeping the user's group private.
 The notion of privacy we use is local differential privacy.}


Our main contribution is a novel scheme for PMGA that we call the Query and Aggregate (Q\&A) scheme that provides local differential privacy guarantees on the users' groups. The Q\&A scheme is interactive in that
the user is assigned a query matrix and sends the server an answer based on his group and value. This allows to shift the bulk of the total communication cost to the query stage (server-to-user) which can be done offline since it does not depend on a user's group and value. Thus, the online user-to-server communication cost does not depend on the number of groups and users, as typically occurs in secure aggregation problems, \Eg \cite{bonawitz_2016}. In Theorem \ref{thm:QA}, we characterize the performance of the Q\&A scheme in terms of privacy, communication cost, and accuracy.

We compare Q\&A to a non-interactive scheme which we call the Randomized Group (RG) scheme. RG is an adaptation of standard randomized response \cite{Warner_1965} schemes from the literature and consists of each user reporting a noisy version of his group and value to the server.
For a fixed total communication cost, we observe that in general Q\&A  offers better accuracy in high privacy regimes (small $\ep$), as illustrated in Figure~\ref{fig:comparisons}.

\subsection{Paper Organization}
The rest of the paper is organized as follows. In Section~\ref{sec:problem_formulation}, we describe the formulation of the Private Multi-Group Aggregation problem. In Section \ref{sec:main_results}, we present our main results, which consist of the Query and Aggregate (Q\&A) scheme and its performance (Theorem \ref{thm:QA}) compared to our proposed Randomized Group (RG) scheme. We present the details of the Q\&A scheme in Section~\ref{sec:QA_Scheme}, and those of the RG scheme in Section \ref{sec:Scheme_R}. We compare the two schemes in Section \ref{sec:comparisons}. Finally, we conclude and give future directions in Section \ref{sec:conclusion}.

\subsection{Notation}
We represent random variables by upper case letters, \Eg $X$, realizations of these random variables by lower case letters, \Eg $x$, and the alphabets of the random variables by calligraphic letters, \Eg $\cX$.
We use $\log(x)=\log_2(x)$ and $\ln(x)=\log_e(x)$. 
Also, for any positive integer $n$, we denote $[n]:=\{1,\dots,n\}$. 
Moreover, we use a colon to refer to whole rows or columns in a matrix or vector. For instance, $A(:,i)=(A(1,i),A(2,i),\dots,A(n,i))^\top$ is the $i^{th}$ column of $A$. 
{The $L_2$-norm of a vector $\mathbf{x}$ is denoted by $||\mathbf{x}||$.}


\section{Problem Formulation}
\label{sec:problem_formulation}
We consider the setting depicted in Figure~\ref{fig:setting}  in which there are  $n$ users, indexed from $1$ to $n$, and  a  single  server.  The users can communicate with the server but not among each other.  Each user $i\in[n]:=\{1,\dots,n\}$ belongs to one of $k$ groups, indexed from $1$ to $k$. Moreover, user $i\in[n]$ holds a value $v_i\in\cV:=\{\pm 1,\dots,\pm m\}$.
We assume that the server knows each user's index but does not know his value or group. We assume that the users are not adversarial and  faithfully participate in the scheme.

We denote by $G_i$ the random variable representing the group that user $i$ belongs to. We assume that the $G_i$, for all $i\in[n]$, are  identical and independent random variables from the alphabet $\cG=\{1,\dots,k\}$. The probability that any user $i\in[n]$ belongs to group $g\in\cG$ is denoted by $\theta_g:=\Pr(G_i=g).$
We denote by    $\theta:=(\theta_1,\dots,\theta_k)$ the realization of the random vector $\Theta$.

Each user $i$  in group $g$ holds an independent random scalar value $V_i$ drawn from the  alphabet  $\cV:=\{\pm 1,\dots,\pm m\}$ according to the  distribution ${p}_g(v):=\Pr(V_i=v|G_i=g)$.
The values of the users in the same group are independent and identically distributed.
We represent the users' value distributions by a $k\times 2m$ matrix
\begin{equation*}
    p:=\left[\begin{smallmatrix} p_1(-m) & \dots &p_1(m) \\
 \vdots& \ddots & \vdots\\
 p_k(-m)& \dots & p_k(m)\end{smallmatrix}\right].
\end{equation*}
The matrix $p$ is unknown, to both the users and the server, and is assumed to be the realization of a random variable $P$. 
Given their group $g\in\cG$, for all $v\in\cV$, the users behave identically, \Ie  ${p}_g(v)=\Pr(V_i=v|G_i=g)$ for any $i\in[n]$.

User $i$ knows the realizations of the random variables $G_i$ and $V_i$ representing his group and value. However, the distribution of the random variables $P$ and $\Theta$, and their realizations, are not necessarily known  neither to the server nor to the user. 

 The goal is to design a scheme that allows the server to compute an estimate of  the sum of values per group, \Ie to estimate the aggregate vector $\bS\in \mathbb{Z}^k$ with 
 \begin{equation*}
     \bS(g)=\sum_{i\in[n]:g_i= g} v_i,\quad \text{for all } g\in\cG.
 \end{equation*}  

 We consider schemes where each user $i$ can be assigned a query $q_i\in\cQ$, which is also known to the server. In response to the query, the user sends the server an answer $a_i\in\cA$. Upon receiving the answers from all users, the server finds an estimate $\hbS$ of $\bS$. 
 We characterize the efficiency of a scheme according to (i) communication, (ii) accuracy, and (iii) privacy. 
\begin{enumerate}[label=\textbf{(\roman*)},wide, labelindent=0pt]
    \item \textbf{\emph{Communication:}} We characterize the communication cost by the number of bits communicated between the server and the user. We look at the communication cost from two vantage points:  (i) user-centric, that measures the communication per user, \Ie the number of bits communicated between a user and the server; and (ii) server-centric,  that   measures the total communication the server receives from  all the users. We refer to the latter as the \textit{total communication cost}. 
    
    \item \textbf{\emph{Accuracy:}}  We use the relative mean square error to measure the accuracy of a scheme $\pi$. The risk of the estimator $\hbS_\pi$ is 
    \begin{equation}
    \label{eq:error_def}
        \Er_\pi:=\frac{1}{n^2}\MSE(\hbS_\pi),
    \end{equation}
    where $\MSE(\hbS_\pi)=\E\left[||\hbS_\pi-\bS||^2\big|P=p,\Theta=\theta\right]$. For ease of notation, the conditioning on $P$ and $\Theta$ is implicit in the rest of the paper. 
    The relative mean square error captures  the accuracy of our estimate relative to the expected true aggregate $\bS$. Since  $\E\left[||\bS||^2\right]$ grows as $\cO(n^2)$, we normalize by ${n^2}$.
   

    
    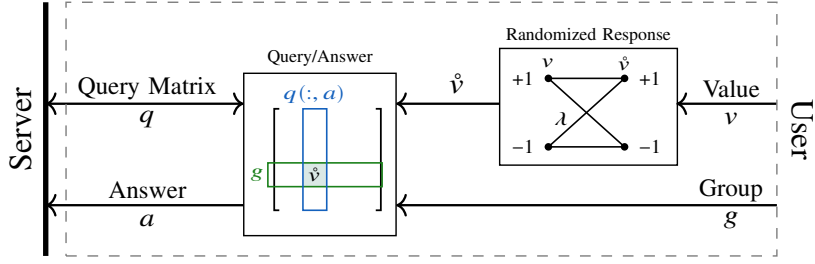
\begin{figure*}[t]
    \centering
    \scalebox{1.35}{
\begin{tikzpicture}
\definecolor{fireenginered}{rgb}{0.81, 0.09, 0.13}
\definecolor{denim}{rgb}{0.08, 0.38, 0.74}

\node[draw, rectangle, minimum width=1.75cm,minimum height=1.15cm] at (6.095,2.22) {};
\node at (6.095,2.92) {\tiny Randomized Response};
\node[fill,circle, inner sep=0.75pt] at (6.45,2.5) {};
\node[fill,circle, inner sep=0.75pt] at (6.45,1.825) {};
\node at (6.7,2.5) {\tiny$+1$};
\node at (6.7,1.825) {\tiny$-1$};
\node at (6.45,2.65) {\tiny$\mathring{v}$};
\node[fill,circle, inner sep=0.75pt] at (5.7,2.5) {};
\node[fill,circle, inner sep=0.75pt] at (5.7,1.825) {};
\node at (5.45,2.5) {\tiny$+1$};
\node at (5.45,1.825) {\tiny$-1$};
\node at (5.7,2.65) {\tiny$v$};
\draw[-,line width=0.45] (6.45,2.5)--(5.7,2.5);
\draw[-,line width=0.45] (6.45,2.5)--(5.7,1.825);
\draw[-,line width=0.45] (6.45,1.825)--(5.7,2.5);
\draw[-,line width=0.45] (6.45,1.825) --(5.7,1.825);
\node at (5.825,2.1) {\tiny$\lambda$};

\node[draw, rectangle, minimum width=1.5cm,minimum height=1.6cm] at (3.45,1.75) {};
\node at (3.45,2.7) {\tiny Query/Answer};

\draw[-,color=black, line width=0.5pt] (4,1.2) -- (4.05,1.2) -- (4.05,2.2) -- (4,2.2);
\draw[-,color=black, line width=0.5pt] (3.05,1.2) -- (3,1.2) -- (3,2.2) -- (3.05,2.2);

\node[fill, color=lavendergray, rectangle, minimum width=0.2cm, minimum height=0.2cm, line width=0.45pt] at (3.4,1.55) {};
\node[color=black] at (3.4,1.55) {\tiny$\mathring{v}$};

\node[draw, color=DarkGreen, rectangle, minimum width=1.125cm, minimum height=0.2cm, line width=0.45pt] at (3.5,1.55) {};
\node[color=DarkGreen] at (2.835,1.55) {\tiny$g$};

\node[draw, color=denim, rectangle, minimum width=0.2cm, minimum height=1cm, line width=0.45pt] at (3.4,1.7) {};
\node[color=denim] at (3.4,2.35) {\tiny$q(:,a)$};


\draw[line width=0.65pt,->] (7.95,2.25) -- (6.96,2.25);
\node at (7.5,2.4) {\scriptsize Value};
\node at (7.5,2.1) {\scriptsize $v$};

\draw[line width=0.65pt,->] (5.22,2.25) -- (4.2,2.25);
\node at (4.8,2.45) {\footnotesize$\mathring{v}$};

\draw[line width=0.65pt,->] (7.95,1.25) -- (4.2,1.25);
\node at (7.5,1.4) {\scriptsize Group};
\node at (7.5,1.1) {\scriptsize $g$};

\draw[line width=0.65pt,<->] (2.7,2.25) -- (0.75,2.25);
\node at (1.75,2.4) {\scriptsize  Query Matrix};
\node at (1.75,2.1) {\scriptsize  $q$};
\draw[line width=0.65pt,->] (2.7,1.25) -- (0.75,1.25);
\node at (1.75,1.4) {\scriptsize Answer};
\node at (1.75,1.1) {\scriptsize $a$};

\node[rotate=270] at (8.2,2) {\small User};

\draw[-,line width=1.5pt] (0.75,0.75)--(0.75,3.25);
\node[rotate=90] at (0.5,2)  {\small Server};


\node[draw,rectangle,dashed, minimum width=7cm, minimum height=2.5cm,color=gray] at (4.45,2) {};

\end{tikzpicture}
}
    \caption{A block diagram representing the Q\&A scheme for a binary alphabet, $\cV=\{-1,1\}$. 
    The user is assigned a query matrix $q$.  He sends the server an answer, $a$, which is an index of a column of this matrix. His answer is based on his group, $g$, and his randomized value,  $\mathring{v}$. To randomize his value, the user applies randomized response, parameterized by $\lambda$.
    }
    \label{fig:block_QA}
\end{figure*}
    
    \item \textbf{\emph{Privacy:}} We keep a user's group private. We use local differential privacy \cite{Dwork_2006,Duchi_2013} as our measure of privacy for  a user's group. Since a user's value and group can be correlated,  it is sometimes necessary (depending on the required privacy parameter) to also hide a user's value in addition to his group. 
To that end, a user's answer to the server is the output of a randomized mechanism  $\cM : \cG\times \cV\times\cQ \rightarrow \cA$ that outputs a user's answer $a$ belonging to an alphabet $\cA$ based on his  group, value, query and local randomness.
    
    \begin{definition}
    \label{def:privacy}
    Let $\ep$ be a positive real number, and $\cM$ be a randomized mechanism.  We say $\cM$ is $\ep$-locally differentially private with respect to the group if for any $g,g'\in\cG$, 
    $q\in \cQ$, and $a\in\cA$,
    \begin{multline}
    \label{eq:privacy}
        \Pr(\cM(G,V,Q)= a|G=g,Q=q,{P}={p},\Theta=\theta)\leq \\e^\ep \Pr(\cM(G,V,Q)= a|G=g',Q=q,{P}={p},\Theta=\theta),
    \end{multline}
     where the probability is taken over the randomness of the mechanism and the random variable $V$.
    \end{definition}

    The probabilities in the local differential privacy definition are taken given the realizations of the random variables $P$ and $\Theta$. Even though the server does not necessarily know these realizations, the privacy definition above  assumes this knowledge. This is needed because,  with enough answers collected from users, the server might infer information about the distributions of $P$ and $\Theta$. 
    
   We note that if a randomized mechanism is $\ep_0$ locally differentially private, then it is also locally differentially private for all $\ep>\ep_0$ motivating the following definition. 
    \begin{definition}
    \label{def:priv_level}
    The privacy level of a scheme (randomized mechanism) is the smallest $\ep>0$ such that \eqref{eq:privacy} is satisfied. 
    \end{definition}
\end{enumerate}

\section{Main Results}
\label{sec:main_results}

\textit{\textbf{Query and Aggregate (Q\&A) Scheme:}} We propose a new scheme for PMGA, which we refer to as the  Query and Aggregate (Q\&A) scheme. Q\&A is characterized by its low communication cost per user, which is independent of the number of groups $k$ and number of users $n$. It also offers an advantageous accuracy for the high privacy regime. 
 Figure \ref{fig:block_QA} summarizes this scheme, which mainly consists of two blocks:
 \begin{enumerate}

     \item {\it  Query/Answer block:} The user is assigned a query matrix\footnote{The assigned query matrix, $q$, is independent of the user's group and value, and is known to both the server and the user.}. His answer is an index of a column of this matrix, and is determined by his value and group. This leverages the randomness in the user's value to hide his group and already provides a level of privacy over the user's group. 

     \item {\it  Randomized Response block:}
       Here, the user adds noise to his value parameterized by $\lambda\geq 0$. This block is not always necessary, except for some cases, such as when the users' groups and values are highly correlated.
       

 \end{enumerate}
 Theorem \ref{thm:QA} characterizes the communication cost, privacy, and accuracy achieved by the Q\&A scheme.

\begin{theorem}[Q\&A Scheme]
\label{thm:QA}
Given a PMGA instance with   $n$ users,  $k$ groups,  alphabet  $\cV=\{\pm 1,\dots,\pm m\}$, and  the users' value distribution $p_g(v)$ for all $g\in\cG,v\in\cV$; the Query and Aggregate  scheme (Q\&A) satisfies the following properties.

\begin{enumerate}
    \item The Q\&A scheme has a communication cost of $\log(2m)$ bits per user.
    \item The Q\&A scheme is $\ep_{\QA}$-LDP with 
    \begin{align}
    \label{eq:priv_C_general}
    e^{\ep_{\QA}}=   \max_{\mathclap{{\substack{{v,v'\in\cV},\\{g,g'\in\cG,}{g'\neq g}}}}}\left\{\frac{(2m(1-\lambda)-1)p_{g}(v)+\lambda}{(2m(1-\lambda)-1)p_{g'}(v')+\lambda}\right\} ,
    \end{align}
    where the randomization parameter $\lambda\in\left[0,\frac{2m-1}{2m}\right)$.
    \item  The estimator of the Q\&A scheme is unbiased and has  relative mean square error 
    \begin{equation}
        \label{eq:QA_error}
        \Er_\QA =\alpha n^{-1},
    \end{equation}
    where {\medmuskip=0mu \thinmuskip=0mu \thickmuskip=0mu $\alpha=\frac{2m\lambda\E[V_i^2]}{2m-2m\lambda-1}+\frac{(4m^2-1)(m+1)\left[(2m-1)(k-1)+2m\lambda\right]}{6(2m-2m\lambda-1)^2}$}. The relative mean square error is  $\cO\left(\frac{km^4}{n}\right)$. 
\end{enumerate}
\end{theorem}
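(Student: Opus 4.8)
The plan is to establish the three parts in order, working from the two mechanisms of Q\&A: (i) a randomized response that maps the true value $V_i$ to $\mathring{V}_i$, keeping $V_i$ with probability $1-\lambda$ and otherwise replacing it by a value drawn uniformly from $\cV\setminus\{V_i\}$; and (ii) a per-user query matrix $q_i$ whose every row is an independent uniformly random permutation of $\cV$, the user answering by the unique column index $a_i$ with $q_i(G_i,a_i)=\mathring{V}_i$. Part (1) is immediate: since $|\cV|=2m$, the answer is an index in $[2m]$, costing $\log(2m)$ bits. The computational engine for parts (2) and (3) is the marginal law of the randomized value, obtained by averaging the randomized response against $p_g$; using $\sum_{w\in\cV}w=0$ one gets $\Pr(\mathring{V}_i=w\mid G_i=g)=\big((2m(1-\lambda)-1)p_g(w)+\lambda\big)/(2m-1)$ and the shrinkage identity $\E[\mathring{V}_i\mid V_i=v]=\beta v$, where $\beta:=(2m(1-\lambda)-1)/(2m-1)$.

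For part (2), since each row of $q_i$ is a permutation, the reported index is exactly the position of $\mathring{V}_i$ in row $G_i$, so $\Pr(A_i=a\mid G_i=g,Q_i=q)=\Pr(\mathring{V}_i=q(g,a)\mid G_i=g)$. Forming the likelihood ratio between two groups cancels the common $1/(2m-1)$; as $a$ ranges over the columns and $q$ over admissible matrices, the pair $(q(g,a),q(g',a))$ realizes every $(v,v')\in\cV\times\cV$, and since the rows are independent permutations the worst case is attained by some query. Maximizing over $a$, $q$, and $g\neq g'$ then yields \eqref{eq:priv_C_general}, i.e.\ the privacy level of Definition~\ref{def:priv_level}.

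For part (3), take the estimator $\hbS(g)=\beta^{-1}\sum_{i}q_i(g,a_i)$ and condition on the realized groups and values. If $g_i=g$ the read-off is $q_i(g,a_i)=\mathring{V}_i$, with conditional mean $\beta v_i$; if $g_i\neq g$ then $a_i$ depends only on row $g_i$ and $\mathring{V}_i$ and is independent of row $g$, so $q_i(g,a_i)$ is uniform on the symmetric set $\cV$ and has mean $0$. Summing gives $\E[\hbS(g)\mid\{g_i,v_i\}]=\bS(g)$, so the estimator is unbiased. Writing $\hbS(g)-\bS(g)=\sum_i Y_i^{(g)}$ with $Y_i^{(g)}:=\beta^{-1}q_i(g,a_i)-v_i\mathbbm{1}[g_i=g]$, the independence of the per-user queries and randomness together with $\E[Y_i^{(g)}]=0$ kills all cross terms, so by identical distribution across users $\MSE(\hbS)=n\sum_g\E[(Y^{(g)})^2]$ and hence $\Er_\QA=\alpha n^{-1}$ with $\alpha=\sum_g\E[(Y^{(g)})^2]$, matching \eqref{eq:QA_error}.

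It remains to evaluate $\alpha$. Expanding the square produces three moment sums: the $v^2$-term sums to $\E[V_i^2]$; the in-group cross term sums to $\E[V_i\mathring{V}_i]=\beta\E[V_i^2]$; and $\sum_g\E[q_i(g,a_i)^2]$ splits into the in-group value $\E[\mathring{V}_i^2]=\beta\E[V_i^2]+2m\lambda\mu_2/(2m-1)$, with $\mu_2:=\tfrac{1}{2m}\sum_{w\in\cV}w^2=(m+1)(2m+1)/6$, plus $(k-1)$ out-group contributions each equal to $\mu_2$. Substituting, simplifying with $1-\beta=2m\lambda/(2m-1)$, and clearing the factor $(2m-1)$ collapses the expression to the stated $\alpha$, and the bound $\Er_\QA=\cO(km^4/n)$ follows from this closed form via the crude estimates $\E[V_i^2]\le m^2$ and bounding the numerator and denominator of $\alpha$ separately (without cancelling the $(2m-1)$ factor). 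I expect the main obstacle to be the variance bookkeeping in this last step---in particular justifying that the out-group read-off is uniform on $\cV$ and that the independent per-user queries make the cross-user covariances vanish---rather than the privacy computation, which is a direct ratio of the two marginal laws.
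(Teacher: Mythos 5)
Your proposal is correct and follows essentially the same route as the paper's proof: the privacy part rests on the same marginal law of the randomized value (the paper derives it via total probability over $\mathring{V}_i$, you via the shrinkage identity), the accuracy part rests on the same two facts (out-group read-offs are uniform on $\cV$ with mean zero; in-group read-offs are the randomized value, unbiased after scaling by $\beta^{-1}$), and your closed form for $\alpha$ agrees with the paper's expression after identical algebra. The only difference is bookkeeping: you center the per-user error $Y_i^{(g)}=\beta^{-1}Q_i(g,A_i)-V_i\mathbbm{1}[G_i=g]$ so that independence kills cross terms immediately and $\MSE = n\sum_g \E[(Y^{(g)})^2]$, whereas the paper expands second moments of sums of auxiliary vectors $X_i$, $Y_i$ and lets the $(n^2-n)$ cross terms cancel at the end --- a cosmetically cleaner organization of the same computation.
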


\noindent
We explain the Q\&A scheme in more details in Section \ref{sec:QA_Scheme}.

\textit{\textbf{Randomized Group (RG) Scheme:}} To better understand the performance of the Q\&A scheme described in Theorem~\ref{thm:QA}, we compare it to the Randomized Group (RG) scheme which adds noise directly to the group.  With probability $\lambda_{gr}$, the user sends the server his true group ($\log(k)$ bits) and true value or a noisy version of it ($\log(2m)$ bits). Otherwise, the user lies about his group and sends a mean zero random value that is independent of his true value.

This scheme is an adaptation of the randomized response \cite{Warner_1965, Kairouz_2016} method used in the differential privacy literature. In 
Theorem \ref{thm:SchemeR} in Section \ref{sec:Scheme_R} we present the details and analysis  of the RG scheme.

\textit{\textbf{Comparison (Q\&A vs.\ RG):}} The Q\&A scheme requires $\log(2m)$ bits per user, while the RG scheme requires $\log(2km)$  bits per user. Therefore, from a user-centric perspective, the Q\&A scheme always outperforms the RG scheme in terms of communication cost. 
However, they achieve different error and privacy trade-offs. 
We also look at the communication cost from a server-centric perspective by fixing  the total  communication cost at the server, and comparing the relative error versus privacy. This allows for a different   number of users for each of the two schemes.\footnote{This is motivated by the idea that, in practice, the server might be choosing a batch of users from a larger population.}

Figure \ref{fig:comparisons}, gives an instance of this comparison for a fixed  communication cost.
The key takeaway from this comparison is that there are two regimes, 
\textit{(i) a high privacy regime} where for small values of the  privacy parameter, $\ep$,  Q\&A  outperforms  RG;
\textit{(ii) a low privacy regime} where for large enough privacy parameter, $\ep$,  RG  outperforms  Q\&A.  This is because, as $\ep$ goes to infinity, the error of the Q\&A scheme converges to a constant strictly bounded away from zero as we cannot further tune the parameters of the scheme. On the other hand, the error of the RG scheme converges to zero. 
We defer a more detailed comparison to Section \ref{sec:comparisons}.






\section{The Query and Aggregate (Q\&A) Scheme}
\label{sec:QA_Scheme}

In this section, we describe  the Q\&A scheme. We begin by an example that illustrates the key ideas of this scheme by focusing on the special case of two groups and a binary alphabet.
We then give the description of the  general (Q\&A) scheme in Section \ref{sec:gen_C}.

\subsection{1-bit Example: Two groups and a binary alphabet}
\label{sec:C_special case}

We focus on  the special case of two groups, $k=2$, and a binary alphabet, $\cV=\{-1,1\}$. In this case, the  Q\&A scheme needs only a   single bit of communication  per user.

\subsubsection*{\textbf{Scheme Description}}
\label{sec:schemeC_special_intrinsic}

 The scheme is composed of the following three steps.
\begin{enumerate}[label=\theenumi.]
    
    \item {\em Queries:} Each user $i$ responds to a random query $q_i$ which is a $2$ by $2$ matrix. More specifically, the query $q_i$ is chosen uniformly at random from the set
    $$\cQ=\left\{\left[\begin{smallmatrix} -1& +1\\ +1& -1\end{smallmatrix} \right],\left[\begin{smallmatrix} -1& +1\\ -1& +1\end{smallmatrix} \right],\left[\begin{smallmatrix} +1& -1\\ -1& +1\end{smallmatrix} \right], \left[\begin{smallmatrix} +1& -1\\ +1& -1\end{smallmatrix} \right]\right\}.$$
    The user's assigned query is independent of his group and value. Moreover, it is assumed that the server knows  the queries assigned to each user.

    \item {\em User's answer}: Each user sends the server a $1$-bit answer, $a_i$, depending on  the query he received. The user only looks at the {\em row} of the query matrix that corresponds to his group, \Ie row $1$ if he is in group $1$ and row $2$ if he is in group $2$. He answers with the index of the {\em column} that contains his value, \Ie $a_i=1$ or $a_i=2$.

    \item {\em Server's estimation:} The server receives the $1$-bit answer $a_i$ from each user $i$. He maps the $1$-bit answer into the  vector $q_i(:,a_i)$, \Ie the $a_i^\textnormal{th}$ column of the query matrix $q_i$. This is possible because he knows the user's assigned query.
    Then, the server forms the  estimates of the aggregate for each group as follows:
    \begin{equation}
    \label{eq:special_case_sum}
        \hbS_{\QA}=\sum_{i=1}^n q_i(:,a_i). 
    \end{equation}

 \end{enumerate}

For example, consider a user $i$ in group $2$ who has value $+1$. If he receives the query $q_i=\left[\begin{smallmatrix} -1& +1\\ +1& -1\end{smallmatrix}\right]$, then his answer is $a_i=1$, which the server maps into the vector  $q_i(:,a_i)=\left[\begin{smallmatrix} -1\\ +1\end{smallmatrix} \right]$. Otherwise, if the user receives the query $q_i=\left[\begin{smallmatrix} -1& +1\\ -1& +1\end{smallmatrix} \right]$, then his answer is $a_i=2$, which is mapped into $q_i(:,a_i)=\left[\begin{smallmatrix} +1\\ +1\end{smallmatrix} \right]$.

The key idea behind these queries is that they provide different, and equally likely, pairings of a value for a particular group with all possible values of the other group. For instance, if we look at the first column of the query matrices, notice that in the query $\left[\begin{smallmatrix} -1& +1\\ +1& -1\end{smallmatrix} \right]$,  the value $-1$ for group $1$ (first row) is paired with the value $+1$ of group $2$ (second row), while in query $\left[\begin{smallmatrix} -1& +1\\ -1& +1\end{smallmatrix} \right]$ it is paired with the value $-1$ of group $2$. 

Next we give a brief analysis of  this scheme, and see how it fairs on our three performance metrics: accuracy (MSE), privacy, and communication cost.

\subsubsection*{\textbf{Accuracy}}
We show that the relative mean square error goes to zero as the number of users increases, allowing the server a better estimate of the true aggregate $\bS$.

Without loss of generality, let us consider,  $\bS(1)$, the aggregate corresponding to group $1$. Then, its estimate is 
\begin{align}
\label{eq:special_case_sum_2}
     \hbS_{\QA}(1)
     &=\sum_{i\in[n]:g_i=1}  q_i(1,a_i)+\sum_{i\in[n]:g_i=2}  q_i(1,a_i)\nonumber
     \\&=\underbrace{\bS (1)}_{\substack{\text{True Aggregate}\\\text{for Group 1}}} +\underbrace{\sum_{i\in[n]:g_i=2}  q_i(1,a_i)}_\text{Noise}.
 \end{align}

Therefore, the estimate $\hbS_{\QA}(1)$ can be interpreted as   the true aggregate with an added noise term. The noise corresponds to the contribution of   the users who do not belong to group~$1$. Since the queries were assigned uniformly at random, the distribution of the answers corresponding to the noise is uniform and independent of the true aggregate $\bS(1)$. It follows from our choice of query matrices that the contribution to the estimate, of each user $i$ in group~$2$,  $ q_i(1,a_i)$, is a realization of the random variable,
\begin{align}
\label{eq:a_hat_QA_special}
     Q_i(1,A_i) = \begin{cases}
-1 & \text{with probability} \hspace{0.5em} \frac{1}{2}, \vspace{0.25em}\\ 
+1 & \text{with probability} \hspace{0.5em} \frac{1}{2}.
\end{cases}
\end{align}

The noise term can be interpreted as the position of a point on the integer number line, $\mathbb{Z}$, after $n$ steps of a simple random walk starting at zero. Alternatively, the noise is the sum of i.i.d. random variables with bounded variance that converges to a zero mean additive Gaussian noise. Either way, this implies that the expectation of the norm of the noise grows as $\mathcal{O}(\sqrt{n})$. And indicates that the relative mean square error, $\Er_\QA$, goes to zero as $\cO(n^{-1})$.

\subsubsection*{\textbf{Privacy}}

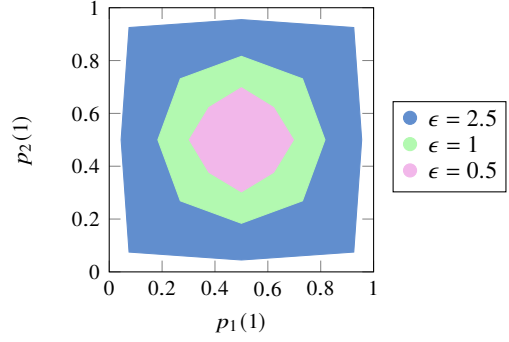
\begin{figure}
    \centering
    \definecolor{lightdenim}{RGB}{223,233,246}
\definecolor{lightgreen}{RGB}{214,232,214}
\definecolor{lavendergray}{RGB}{218,232,229}
\begin{tikzpicture}

\begin{axis}[%
width=100,
height=100,
scale only axis,
xmin=0,
xmax=1,
ymin=0,
ymax=1,
xlabel=$p_{1}(1)$,
ylabel=$p_{2}(1)$,
yticklabel style = {font=\footnotesize,xshift=0pt},
xticklabel style = {font=\footnotesize,yshift=0pt},
x label style={font=\footnotesize,yshift=3pt},
y label style={font=\footnotesize,yshift=-2pt},
]
\draw[fill= bleudefrance,color= bleudefrance] (7.5,7.5) -- (4.5,50) -- (7.5,92.5) --(50,95.5) -- (92.5,92.5) -- (95.5,50) -- (92.5,7.5) -- (50,4.5) --cycle;
\draw[fill= grannysmithapple, color= grannysmithapple] (26.9,26.9) -- (18.4,50) -- (26.9,73.1) --(50,81.6) -- (73.1,73.1) -- (81.6,50) -- (73.1,26.9) -- (50,18.4) --cycle;
\draw[fill= lavenderpurple, color= lavenderpurple] (37.7,37.7) -- (30.3,50) -- (37.7,62.2) --(50,69.7) -- (62.2,62.2) -- (69.7,50) -- (62.2,37.7) -- (50,30.3) --cycle;

\end{axis}

\node[draw,rectangle,minimum height=1.2cm,minimum width=1.5cm] at (4.525,1.665) {};
\node at (4,2.05) [circle,draw,circle, inner sep=1.75pt,fill=bleudefrance, color=bleudefrance] {};
\node at (4,1.7) [circle,draw,circle, inner sep=1.75pt,fill=grannysmithapple, color=grannysmithapple] {};
\node at (4,1.35) [circle,draw,circle, inner sep=1.75pt,fill=lavenderpurple, color=lavenderpurple] {};

\node at (4.7,2.05)  {\small $\ep=2.5$};
\node at (4.58,1.7)  {\small $\ep=1$};
\node at (4.7,1.35)  {\small $\ep=0.5$};

\end{tikzpicture}%
    \caption{Privacy for the Q\&A Scheme. The values of $p_{1}(1)$ and $p_{2}(1)$ in the shaded regions of the figures above guarantee the fixed privacy parameters $\ep=2.5$, $\ep=1$, and $\ep=0.5$, respectively. The higher the privacy requirement, \Ie smaller $\ep$, the smaller the region. We note that the indicated region is the full interior of the polygon.}
    \label{fig:privacy}
\end{figure}

We show that the Q\&A scheme is $\ep_\QA$ locally differentially private. From Definition \ref{def:privacy},
\begin{align}
\label{eq:special_privacy_analysis}
    e^{\ep_{\QA}}
    &= \max\limits_{\substack{{g,g'\in\{1,2\},}\\{{a}\in \{1,2\}, q\in\cQ}}} \frac{\Pr({A}_i={a}|G_i=g,Q_i=q)}{\Pr({A}_i={a}|G_i=g',Q_i=q)} .
\end{align}
The first thing we notice is that the ratio in \eqref{eq:special_privacy_analysis} is equal to $1$ for $g=g'$, and the maximum  is always greater than or equal to $1$ when  $g\neq g'$. Therefore, we can limit the maximization in \eqref{eq:special_privacy_analysis} to $g\neq g'$. Moreover, a user's  value ($+1$ or $-1$)  is a deterministic  function of the answer, the  query, and the  group. 
Therefore, we can simplify
 \eqref{eq:special_privacy_analysis} to
\begin{align}
    \label{eq:special_privacy_analysis_2}
    e^{\ep_{\QA}}
    &=\max\limits_{\substack{{g,g'\in\{1,2\},g\neq g'}\\{v,v'\in \{-1,1\}, q\in\cQ}}} \frac{\Pr({V}_i={v}|G_i=g,Q_i=q)}{\Pr({V}_i={v'}|G_i=g',Q_i=q)}\nonumber\\
    &=\max\limits_{\substack{{g,g'\in\{1,2\},g\neq g'}\\{v,v'\in \{-1,1\}}}} \frac{p_g(v)}{p_{g'}(v')},
\end{align}
which follows from the independence of the random variables representing the user's value, $V_i$, and his assigned query, $Q_i$, and the definition   $p_g(v)=\Pr(V_i=v|G_i=g)$.\footnote{To simplify our discussion, in the rest of this paper, we assume that the probabilities $p_g(v)$ are in $(0,1)$, for all $g\in[k]$ and $v\in\cV$. }
Thus, we obtain an expression of the privacy which only depends on the users' value distributions.

We refer to the privacy parameter $\ep_{\QA}$, described in \eqref{eq:special_privacy_analysis_2}, as the \textit{intrinsic privacy} of the scheme. Notice that it depends on the users' value distributions, $p_1(\cdot)$ and $p_2(\cdot)$; however, neither the server nor the users know these $p_1(\cdot)$ and $p_2(\cdot)$. Therefore, they cannot directly calculate the privacy parameter $\ep_{\QA}$. Nonetheless, the privacy parameter, $\ep_{\QA}$, can be bounded if the users have prior information about their value distributions. For example, suppose the users know that $p_1(1)$ and $p_2(1)$ are bounded  such that $c_{\min}\leq p_1(1),p_2(1)\leq c_{\max}$, where the constants $c_{\min},c_{\max}\in (0,1)$. In this case, we can upper bound   the intrinsic privacy level
 $
     \ep_{\QA}\leq \ln\left(\frac{c_{\max}}{c_{\min}}\right). 
 $

Next we give more insights about the relationship between the users' value distributions, $p_g(\cdot)$, and the privacy parameter.
Let us fix a privacy level $\ep$, and define the region that describes the users' value distributions, $p_1(1), p_2(1)\in (0,1)$, which guarantee that the scheme is $\ep$-LDP. 
In Figure \ref{fig:privacy}, we plot this region 
for different values of $\ep$.
 Looking at Figure \ref{fig:privacy} and \eqref{eq:special_privacy_analysis_2}, we observe the following.
\begin{itemize}
    \item The less privacy we require, \Ie the larger the privacy level $\ep$, the larger the highlighted region, \Ie more values of $p_1(1)$ and $p_2(1)$ can guarantee this level of privacy.
    \item  The closer $p_{1}(1)$ and $p_{2}(1)$ are to $0.5$, the higher the privacy guarantee.  And perfect privacy, \Ie $\ep=0$, is only guaranteed when $p_{1}(1)=p_{2}(1)=0.5$. Intuitively, this occurs because when $p_{1}(1)=p_{2}(1)=0.5$, a user's answer to the query is independent of his group.
  
\end{itemize}

A takeaway from the above observations is that  not all privacy levels can be guaranteed for fixed user value distributions $p_1(\cdot)$ and $p_2(\cdot)$.  In other words, the intrinsic privacy of the scheme may  not always  be enough. The reason is that, in its basic form, the Q\&A scheme described above, does not guarantee privacy over the user's value. Thus, when the user's value and group are sufficiently correlated, the user's value might leak more information about his group than permitted by the $\ep$-LDP requirement. In such cases, the general Q\&A scheme adds a second layer of privacy to the user's value to further hide his group. In addition, this provides flexible privacy guarantees which do not  depend only on the user's value distributions. 
This second layer of privacy is obtained by adding a randomized response block, parameterized by the probability of lying  $\lambda$, which hides a user's value (see Figure~\ref{fig:block_QA}). We give a full description of the general Q\&A scheme in Section \ref{sec:gen_C}. 


\subsubsection*{\textbf{Communication}} 
Since the user's answer is either $1$ or $2$, \Ie $a_i\in\{1,2\}$, the scheme's communication cost is one bit per user. Moreover, we show in Theorem~\ref{thm:QA} that the general scheme's communication cost is always $1$ bit per user when the alphabet, $\cV$, is binary, irrespective of the number of groups. This is the fundamental limit on the zero-error communication cost if there were no groups and no privacy requirements.

Note that the query assignment must be known to both the server and the user. This  can be accomplished without incurring  communication cost.  
For instance, it can be implemented as the output of a public hash function that takes as input the user's index $i\in[n]$, or simply considered part of the scheme agreement that does not depend on a user's group and value.

\subsection{The General Q\&A Scheme}
\label{sec:gen_C}

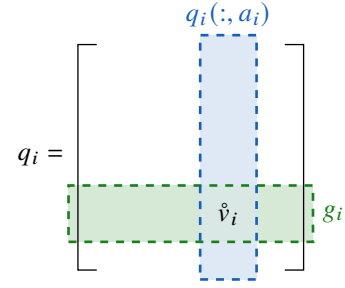
\begin{figure}[t]
    \centering
    \begin{tikzpicture}
\definecolor{fireenginered}{rgb}{0.81, 0.09, 0.13}
\definecolor{denim}{rgb}{0.08, 0.38, 0.74}
\definecolor{lightdenim}{RGB}{223,233,246}
\definecolor{lightgreen}{RGB}{214,232,214}
\definecolor{lavendergray}{RGB}{218,232,229}

\node[fill, dashed, color=lightgreen, rectangle, minimum width=3.25cm, minimum height=0.75cm, line width=1pt] at (3,1.25) {};
\node[fill, dashed, color=lightdenim, rectangle, minimum width=0.75cm, minimum height=3.25cm, line width=1pt] at (3.5,2) {};

\node[color=black] at (1,2) {$q_i=$};
\draw[-,color=black, line width=0.5pt] (1.75,0.5) -- (1.5,0.5) -- (1.5,3.5) -- (1.75,3.5);
\draw[-,color=black, line width=0.5pt] (4.25,0.5) -- (4.5,0.5) -- (4.5,3.5) -- (4.25,3.5);


\node[fill, color=lavendergray, rectangle, minimum width=0.75cm, minimum height=0.75cm, line width=1pt] at (3.5,1.25) {};
\node[color=black] at (3.5,1.25) {$\mathring{v}_i$};

\node[draw, dashed, color=DarkGreen, rectangle, minimum width=3.25cm, minimum height=0.75cm, line width=1pt] at (3,1.25) {};
\node[color=DarkGreen] at (4.9,1.25) {$g_i$};

\node[draw, dashed, color=denim, rectangle, minimum width=0.75cm, minimum height=3.25cm, line width=1pt] at (3.5,2) {};
\node[color=denim] at (3.5,3.9) {$q_i(:,a_i)$};

\end{tikzpicture}
    \caption{User $i$ sends the answer $a_i$ based on  his assigned query matrix $q_i$, his group $g_i$, and his randomized value $\mathring{v}_i$. His answer is the index of the column that contains his randomized value. The server maps this answer to the $a_i^{th}$ column of $q_i$.}
    \label{fig:query_matrix}
\end{figure}

In this section, we describe the general Q\&A scheme, for any number of groups $k\geq 2$, and alphabet parameter $m\in\mathbb{N}$.
This scheme is obtained by generalizing the query matrices of the previous example and is presented in 
Figure \ref{fig:query_matrix}. 
The  Q\&A scheme includes an additional randomized response block for improved privacy as described in Figure~\ref{fig:block_QA}. 

\begin{enumerate}[wide, labelindent=0pt]
\item \textbf{Queries:} 
Each user is assigned a random query matrix of dimension   $k\times 2m$  and elements in $\cV=\{\pm 1,\dots,\pm m\}$. The query matrices assigned to each user are chosen independently and uniformly at random from the set $\cQ$ defined as
\begin{equation}
\label{eq:def_cQ}
    \cQ:=\left\{q\in\cV^{k\times 2m}\big| 
q(g,:)\in \textbf{Sym}(\cV) \textnormal{ for all } g\in [k]\right\},
\end{equation}
where $q(g,:)=(q(g,1),\dots,q(g,k))$ and $\textbf{Sym}(\cV)$ is the set of all row vectors which are an ordered permutation of the finite set $\cV$.\footnote{An ordered permutation of a set $\cV$ is a vector where each element is a distinct element of $\cV$, \Eg $\textbf{Sym}(\{\pm 1,\pm 2\})$ has $4!=24$ elements including $(-2,-1,1,2)$ and $(1,2,-2,-1)$. }
Each row of a matrix $q\in\cQ$ is a permutation of all the possible $2m$ values. Notice that the values cannot be repeated within a  row but rows can be repeated.  We denote by $q_i$ the query assigned to user $i$.

We assume that the server also knows the query assigned to the user. As previously mentioned, since the query does not depend on the user's group or value, it can be assigned offline as part of the scheme agreement, or  implemented as the output of a public hash function.



\item \textbf{User's Answer:} Given his assigned query, user $i$ hides his value using the randomized response block parameterized by $\lambda$, as in Figure \ref{fig:block_QA}.  That is, given his true value $v_i$, the user first chooses a randomized value $\mathring{v}_i$ according to the distribution
\begin{equation}
\label{eq:dist_V_ring}
    \Pr(\mathring{V}_i=\mathring{v}_i|V_i=v_i)=\begin{cases}
    1-\lambda &\text{ for } \mathring{v}_i=v_i\\ \frac{\lambda}{2m-1} &\text{ for } \mathring{v}_i\in\cV-\{v_i\},
    \end{cases}
\end{equation}
where $\lambda\in[0,1-\frac{1}{2m})$.
When $\lambda=0$, \Ie no privacy over the user's value, then $V_i=\mathring{V}_i$.

Then, user $i$ looks at the $g_i^{\textnormal{th}}$ row ($g_i$ is the user's group) of the query matrix $q_i$, and sends to the server the answer $a_i$, which is  the index of the column that has his randomized value $\mathring{v}_i$.
More precisely, $a_i$ is such that $q_i(g_i,a_i)=\mathring{v}_i$ as explained in Figure~\ref{fig:query_matrix}.

\item \textbf{Server's Estimation:} Upon receiving user $i$'s answer, the server maps it into the $a_i^{\textnormal{th}}$ column of query $q_i$, \Ie
$$q_i(:,a_i)=(q_i(1,a_i),q_i(2,a_i),\dots , q_i(k,a_i))^\top.$$ 
The server sums the mapped answers from all the users, and multiplies by an unbiasing term (see Appendix \ref{app:proof_thm_1} for more details), to find the estimate of the true aggregate, $\bS$, \Ie  \begin{equation}
\label{eq:server_estimation_QA}
    \hbS_{\QA}=\frac{2m-1}{2m-2m\lambda-1}\sum_{i\in[n]}q_i(:,a_i).
\end{equation} 

\end{enumerate}

Below we give examples of possible queries and answers.

\begin{example}
Consider the setting where there are $k=3$ groups and the alphabet of values is $\cV=\{\pm1,\pm 2\}$. Let $\lambda=0$, \Ie $V_i=\mathring{V}_i$. Suppose that user $1$ has value $v_1=-1$ and belongs to group $g_1=2$. For instance, if he is assigned the query 
$$
    q_1=\left[\begin{smallmatrix}
    -2&-1&+1&+2\\
    -2&+1&-1&+2\\
     +2&-1&-2&+1
    \end{smallmatrix}
    \right],
$$
then his answer is $a_1=3$, because his value, $v_1=-1$, is the third element of the second row (corresponding to his group $g_1=2$) of $q_1$. Upon receiving this answer, the server decodes it into the third column of $q_1$, \Ie  $q_1(:,a_1)=(+1,-1,-2)^\top$. 

If the user is assigned the query
$$
    q_1=\left[\begin{smallmatrix}
    -2&-1&+1&+2\\
    +1&-2&+2&-1\\
     +1&-2&+2&-1
    \end{smallmatrix}
    \right],
$$
his answer will be $a_1=4$, which the server decodes into $q_1(:,a_1)=(+2,-1,-1)^\top$. In both cases $q_1(g_1,a_1)=v_1$.
\end{example}

We note a few characteristics of this design of queries and answers. Since every row of any query matrix $q\in\cQ$ contains all possible values, the user's value is always  one of the elements of the row vector corresponding to his group. Moreover, from the server's perspective, looking at the mapped answer $q_i(:,a_i)$, \Ie a column vector of the user's assigned query $q_i$, the user's value (or randomized value) is in row $g_i$ of $q_i$. As for all the other elements of the vector, they are uniformly distributed over $\cV$. This follows from the design of the query alphabet $\cQ$ and mirrors \eqref{eq:a_hat_QA_special} from the previous section. It is also the key idea for the accuracy proof of Theorem \ref{thm:QA}.



An interesting property of the Q\&A scheme is that, depending on the required privacy, one can choose $\lambda=0$, \Ie no randomized response block in Figure \ref{fig:block_QA}. The Q\&A scheme still guarantees local differential privacy with
\begin{equation}
\label{eq:intrinsic_privacy_QA}
    \ep_{\QA}= \ln \left( \max_{\substack{{v,v'\in\cV}\\{g,g'\in\cG,g'\neq g}}}\left\{\frac{p_{g}(v)}{p_{g'}(v')}\right\} \right),
\end{equation}
which follows from \eqref{eq:priv_C_general}. 
As in the previous section, we refer to  this as the  \textit{intrinsic privacy} of the scheme,  which corresponds to the special case of $\lambda=0$. 
If the intrinsic privacy is not enough because of a high correlation between the user's group and value, external noise can be added to the values through the randomized response block with $\lambda$ chosen appropriately depending on the required privacy $\ep$.


\begin{remark}[The choice of $\lambda$]
\label{rem:choice_of_rho}
Given a required privacy parameter $\ep$, the parameter $\lambda$ that can guarantee this given $\ep$ is determined using \eqref{eq:priv_C_general}. However, this requires the knowledge of the value distributions, $p_g(.)$ for all $g\in\cG$.
Nevertheless, one can still use \eqref{eq:priv_C_general} to find a bound on  $\lambda$  that is independent of the users' value distributions as follows,
\begin{equation}
\label{eq:QA_rho}
    \lambda\geq\frac{2m-1}{2m+e^\ep -1}.
\end{equation}
This bound can be tightened if some side information is known about the users' value distributions. For instance, suppose that $c_{\min}<p_g(v)< c_{\max }$ for all $g\in\cG$ and $v\in\cV$, for some  constants $c_{\max}, c_{\min} \in [0,1]$, $c_{\max}> c_{\min}$. In this case, the following tighter bound can be shown 
\begin{equation}
\label{eq:QA_rho_2}
    \lambda\geq\frac{(2m-1)c_{\max}-c_{\min}e^\ep}{2m(c_{\max}-c_{\min}e^\ep)+e^\ep -1}.
\end{equation}
 Evidently, smaller values of $\lambda$ are better for accuracy because the mean square error is increasing in $\lambda$.
 
\end{remark}

\begin{remark}[Error Calculation]
Computing the mean square error relies on generalizing the  approach in the example in Section \ref{sec:C_special case}. We have two types of errors in the estimate of the aggregate per group, \Ie  $\hbS_{\QA}(g)$. The first  is the error introduced by the users that are not in  group $g$. This can be approximated by a  zero mean additive noise as shown in  \eqref{eq:special_case_sum_2}. The second  is the error introduced by the randomized response block acting on  the users' value. This error biases the sum $\sum q_i(:,a_i)$. Therefore, to unbias the estimator we multiply by  $({2m-1})/({2m-2m\lambda-1})$ as seen in \eqref{eq:server_estimation_QA}.
The details of the error calculation can be found in the proof  in  Appendix \ref{app:proof_thm_1}.
\end{remark}

Theorem \ref{thm:QA}, first stated in Section \ref{sec:main_results}, provides the performance of the Q\&A scheme with respect to communication cost, privacy, and accuracy. For its proof see Appendix \ref{app:proof_thm_1}.


\section{The Randomized Group (RG) Scheme}
\label{sec:Scheme_R}
To better gauge the performance of the Q\&A scheme we compare it to the Randomized Group (RG) scheme which directly hides a user's group by adding noise to it through a randomized response step. 
In RG, each user $i$ sends the server an answer $a_i=(\mathring{g}_i,\mathring{v}_i)$ of his privatized group and value.
That is, $\mathring{g}_i$ is chosen randomly according to the distribution  
\begin{equation}
\label{eq:RG_def_G_round}
    \Pr(\mathring{G}_i=\mathring{g}_i|G_i=g_i)=\begin{cases}
    \scalebox{0.8}{$1-\lambda_{gr}$}& \textnormal{for } \mathring{g}_i=g_i\\
    \frac{\lambda_{gr}}{k-1}& \textnormal{for }\mathring{g}_i\in[k]-\{g_i\},
    \end{cases}
\end{equation}
where $g_i$ is user $i$'s group and the parameter $\lambda_{gr}\in(0,1)$. 
As for the value $\mathring{v}_i$, there are two cases: 
\begin{enumerate}[leftmargin=0.45cm]
    \item  $\mathring{g}_i\neq g_i$: In this case, the user chooses $\mathring{v}_i$, uniformly at random, \Ie
    \begin{equation}
    \label{eq:RG_def_V_round_neq}
        \Pr(\mathring{V}_i=\mathring{v}_i|\mathring{g}_i\neq g_i)=\frac{1}{2m}
    \end{equation}
        for all $\mathring{v}_i\in\cV$.
    This choice ensures  that when   users lie about their groups,  the aggregate of their contribution has a zero mean.
    \item  $\mathring{g}_i=g_i$: In this case, the user lies about his true value with probability $\lambda_{vl}\in\left[0,1-\frac{1}{2m}\right)$. That is, he randomly chooses a value, $\mathring{v}_i$, according to the distribution 
    \begin{align}
    \label{eq:RG_def_V_round}
        \Pr(\mathring{V}_i=\mathring{v}_i|V_i=v_i,&\mathring{g}_i=g_i)\hspace{-2pt}=\hspace{-2pt}\begin{cases}
        \scalebox{0.8}{$1-\lambda_{vl}$} &  \hspace{-4pt} \textnormal{for } \mathring{v}_i=v_i,\\
        \frac{\lambda_{vl}}{2m-1} & \hspace{-4pt} \textnormal{for } \mathring{v}_i\in\cV-\{v_i\}.
        \end{cases}
    \end{align}
\end{enumerate}  

The server aggregates the received answers and re-scales the aggregate to unbias the estimator, such that, for all $g\in [k]$ the estimate of the true aggregate of group $g$, $\bS(g)$, is
\begin{equation*}
    \hbS_\RG(g):=\frac{2m-1}{(1-\lambda_{gr})(2m(1-\lambda_{vl})-1)}\sum_{i:a_i(1)= g} a_i(2).
\end{equation*}
Note that there are no queries assigned to users in this scheme.

Theorem \ref{thm:SchemeR} characterizes the scheme's performance with respect to communication cost, privacy, and accuracy.

\begin{theorem}
\label{thm:SchemeR}
Given a PMGA instance with   $n$ users,  $k$ groups,  alphabet  $\cV=\{\pm 1,\dots,\pm m\}$, and  the users' value distribution $p_g(v)$ for all $g\in\cG,v\in\cV$;
the Randomized Group scheme (RG) is parameterized by the randomization parameters $\lambda_{gr}\in(0,1)$, and $\lambda_{vl}\in\left[0,1-\frac{1}{2m}\right)$ and satisfies the following properties.

\begin{enumerate}
    
    \item The RG scheme has a communication cost of $\log(2km)$ bits per user.
    \item The RG scheme is $\ep_\RG$-LDP with 
    \begin{align}
    \label{eq:priv_RG}
    e^{\ep_\RG}= \max\bigg\{\beta_1(p_{\max}\beta_2+\lambda_{vl}),\frac{1}{\beta_1(p_{\min}\beta_2+\lambda_{vl})}\bigg\},
    \end{align}
    where $p_{\max}=\max\limits_{g\in\cG,\in v\in\cV}p_{g}(v)$,  $p_{\min}=\min\limits_{g\in\cG,\in v\in\cV}p_{g}(v)$, $\beta_1=\frac{2m(k-1)(1-\lambda_{gr})}{(2m-1)\lambda_{gr}}$, and $\beta_2=(2m(1-\lambda_{vl})-1)$.
    \item 
    The estimator of the RG scheme is unbiased and has relative mean square error 
    \begin{equation}
    \label{eq:RG_error}
        \Er_\RG= \beta_3 n^{-1},
    \end{equation}
    where $\beta_3=\E[V_1^2]\left(\frac{2m-1}{(1-\lambda_{gr})(2m(1-\lambda_{vl})-1))}-1\right)+\frac{(4m^2-1)(m+1)(2m\lambda_{vl}(1-\lambda_{gr})+\lambda_{gr}(2m-1))}{6(1-\lambda_{gr})^2(2m(1-\lambda_{vl})-1)^2}$. The relative mean square error is  $\cO\left(\frac{m^4k^2}{ne^\ep}\right)$. 
\end{enumerate}

\end{theorem}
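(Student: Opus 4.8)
The plan is to verify the three claims in turn, reusing the conditioning-and-marginalization strategy from the Q\&A analysis. The communication claim is immediate: each answer $a_i=(\mathring{g}_i,\mathring{v}_i)$ consists of a group label $\mathring{g}_i\in[k]$ and a value $\mathring{v}_i\in\cV$, which costs $\log(k)+\log(2m)=\log(2km)$ bits.

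For the privacy claim I would first note that, since there are no queries, the mechanism is $\cM(G,V)=(\mathring{G},\mathring{V})$, and Definition \ref{def:privacy} requires bounding the ratio of $\Pr(\mathring{G}=\mathring{g},\mathring{V}=\mathring{v}\mid G=g)$ across groups. Marginalizing over $V$ using \eqref{eq:RG_def_G_round}, \eqref{eq:RG_def_V_round_neq}, and \eqref{eq:RG_def_V_round} splits the computation into two regimes. When the reported group is truthful ($\mathring{g}=g$), combining the group-retention probability $1-\lambda_{gr}$ with the value randomized response gives $(1-\lambda_{gr})\frac{p_g(\mathring{v})\beta_2+\lambda_{vl}}{2m-1}$, where the identity $1-\lambda_{vl}-\frac{\lambda_{vl}}{2m-1}=\frac{\beta_2}{2m-1}$ produces exactly $\beta_2$. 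When the group is falsified ($\mathring{g}\neq g$), the uniform value choice \eqref{eq:RG_def_V_round_neq} yields the group- and value-independent quantity $\frac{\lambda_{gr}}{(k-1)2m}$. The extremal ratio in \eqref{eq:privacy} is then attained by pairing a truthful report under $g$ against a falsified report under $g'$ (and the reverse): the first pairing gives $\beta_1(p_g(\mathring{v})\beta_2+\lambda_{vl})$, maximized at $p_{\max}$, and the second its reciprocal, maximized at $p_{\min}$, which is precisely \eqref{eq:priv_RG}.

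For accuracy I would write $\hbS_\RG(g)=c\sum_i \mathring{V}_i\,\mathbbm{1}[\mathring{G}_i=g]$ with $c=\frac{2m-1}{(1-\lambda_{gr})\beta_2}$ and condition on the realized groups and values. Using the alphabet symmetry $\sum_{v\in\cV}v=0$, a truthful user in group $g$ contributes $\E[\mathring{V}_i\mathbbm{1}[\mathring{G}_i=g]\mid g_i=g,v_i]=(1-\lambda_{gr})\frac{\beta_2}{2m-1}v_i$, while any user in another group contributes $0$, because a falsified value is uniform and hence mean zero; since $c(1-\lambda_{gr})\frac{\beta_2}{2m-1}=1$, this gives $\E[\hbS_\RG(g)\mid\{g_i,v_i\}]=\bS(g)$, so the estimator is unbiased. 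The MSE then reduces, by the tower rule and independence across users, to $\MSE(\hbS_\RG)=\sum_g c^2\sum_i\E\big[\var(\mathring{V}_i\mathbbm{1}[\mathring{G}_i=g]\mid g_i,v_i)\big]$. Computing the per-user variances with $\sum_{v\in\cV}v^2=\frac{m(m+1)(2m+1)}{3}$ splits them into a truth-telling contribution scaling with $\E[V_1^2]$ and a lying contribution scaling with that second moment; summing over the $n$ users and $k$ groups and dividing by $n^2$ collects everything into $\beta_3 n^{-1}$ as stated in \eqref{eq:RG_error}.

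The last step, and the \textbf{main obstacle}, is converting $\beta_3 n^{-1}$ into the order $\cO\!\left(m^4k^2/(ne^\ep)\right)$. This requires substituting the privacy relation \eqref{eq:priv_RG} back into the error: solving the binding (first) branch of \eqref{eq:priv_RG} for the group-flip probability shows that, to maintain $\ep$-LDP, one must take $\lambda_{gr}=\cO(mk\,e^{-\ep})$ in the high-privacy regime, the factor $k$ arising because the falsified mass is spread as $\frac{\lambda_{gr}}{k-1}$ over the $k-1$ decoy groups. Plugging this into $\beta_3$ and bounding each factor by its leading power (using $\beta_2=\cO(m)$, $\E[V_1^2]\le m^2$, $\sum_{v\in\cV}v^2=\cO(m^3)$, and $1-\lambda_{gr}=\Theta(1)$) yields the claimed polynomial upper bound. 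The delicate part is keeping the two randomization parameters $\lambda_{gr}$ and $\lambda_{vl}$ mutually consistent between the privacy and accuracy expressions so that the substitution is valid, rather than optimizing them independently.
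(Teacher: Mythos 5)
Your treatment of the communication cost and of the privacy level is correct and is essentially the paper's own argument: the paper computes the same two conditional answer probabilities, $(1-\lambda_{gr})\frac{p_g(v)\beta_2+\lambda_{vl}}{2m-1}$ for a truthfully reported group and $\frac{\lambda_{gr}}{2m(k-1)}$ for a falsified one, and obtains \eqref{eq:priv_RG} from the extremal ratio between a truthful report under $g$ and a falsified report under $g'$, exactly as you do.

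Your accuracy argument takes a genuinely different route. The paper introduces auxiliary indicator vectors $X_i$ (true group/value) and $Z_i$ (reported group/value), computes the unconditional moments $\E\left[\sum_i Z_i(j)\right]$, $\E\left[\left(\sum_i Z_i(j)\right)^2\right]$ and the cross term $\E\left[X_i(j)Z_i(j)\right]$, and expands the mean square error term by term. You instead condition on the realized $\{g_i,v_i\}$, verify conditional unbiasedness (liars contribute zero mean because a falsified value is uniform on the symmetric alphabet, and $c(1-\lambda_{gr})\beta_2/(2m-1)=1$), and apply the tower rule so that the MSE collapses to $\sum_g c^2\sum_i\E\left[\var\left(\mathring{V}_i\mathbbm{1}[\mathring{G}_i=g]\mid g_i,v_i\right)\right]$. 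This eliminates the cross-term bookkeeping and proves unbiasedness against $\bS$ directly (the paper, as written, verifies $\E[\hbS_\RG-\hbS_\QA]=0$, apparently a typo for $\bS$). One step worth writing out explicitly: summing the lying contribution over the $k-1$ groups $g\neq g_i$ cancels the $\frac{1}{k-1}$ in the falsification probability, which is why $\beta_3$ carries no explicit $k$; with that, your decomposition reproduces the paper's \eqref{eq:MSE_R_2}, i.e., the stated $\beta_3$.

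The genuine gap is in your final step, the $\cO\left(\frac{m^4k^2}{ne^\ep}\right)$ bound. Your bookkeeping is internally inconsistent: you place yourself in the high-privacy regime while also assuming $1-\lambda_{gr}=\Theta(1)$. By Corollary \ref{cor:minimum_RG} (the branch $\lambda_{vl}^*=0$), the error-minimizing choice is $\lambda_{gr}^*=\frac{2m(k-1)p_{\max}}{2m(k-1)p_{\max}+e^\ep}$, hence $1-\lambda_{gr}^*=\frac{e^\ep}{2m(k-1)p_{\max}+e^\ep}$, which is $\Theta(1)$ only when $e^\ep\gtrsim mk$; in the high-privacy regime $e^\ep\ll mk$ it is $\Theta\left(\frac{e^\ep}{mk}\right)$. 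Under your two assumptions the computation yields $\beta_3=\cO\left(\frac{m^3k}{e^\ep}\right)$, i.e., \emph{linear} in $k$ — and that bound is actually false: for fixed $m$ and $\ep$, the lying term of $\beta_3$ behaves like $\cO(m^2)\cdot\frac{\lambda_{gr}}{(1-\lambda_{gr})^2}=\Theta\left(\frac{m^4k^2}{e^{2\ep}}\right)$, so the true error grows quadratically in $k$. This $(1-\lambda_{gr})^{-2}$ blow-up is precisely where the $k^2$ in the statement comes from, and the claimed bound then follows since $e^{2\ep}\geq e^{\ep}$. So the theorem's bound is true, but your derivation establishes it only where $e^\ep\gtrsim mk$ and breaks exactly in the regime you invoke; the fix is to replace $1-\lambda_{gr}=\Theta(1)$ by the exact expression above and track $(1-\lambda_{gr})^{-2}$. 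For completeness you must also carry the second branch of Corollary \ref{cor:minimum_RG} (when $e^{2\ep}<p_{\max}/p_{\min}$, forcing $\lambda_{vl}^*>0$), since restricting to the "binding first branch" of \eqref{eq:priv_RG} silently assumes $e^{2\ep}\geq p_{\max}/p_{\min}$; the paper handles both cases by citing the corollary.
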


\begin{proof}
See Appendix \ref{app:proof_thm_RG}.
\end{proof}

{

The following corollary characterizes the relationship between the randomization parameters, $\lambda_{gr}$, $\lambda_{vl}$, and the privacy parameter, $\ep_\RG$.




\begin{corollary}
\label{cor:minimum_RG}
Let $\ep>0$ be the required privacy, 
then, the parameters $\lambda_{gr}^*$ and $\lambda_{vl}^*$  that guarantee the required privacy and minimize the relative error of the RG scheme are given below.
\begin{itemize}
    \item If $e^{2\ep}<\frac{p_{\max}}{p_{\min}}$ and $p_{\max}\neq \frac{1}{2m}$, then $$\textstyle \lambda_{vl}^*=\frac{(2m-1)(p_{\max}-e^{2\ep}p_{\min})}{(2p_{\max}-1)+(1-2p_{\min})e^{2\ep}}$$ and
    $$\textstyle \lambda_{gr}^*=\frac{2m(k-1)(p_{\max}-p_{\min})e^\ep}{2m(k-1)(p_{\max}-p_{\min})e^\ep+(1-2p_{\min})e^{2\ep}+2p_{\max}-1}.$$
    \item If $e^{2\ep}\geq\frac{p_{\max}}{p_{\min}}$ or $p_{\max}= \frac{1}{2m}$, then $\lambda_{vl}^*=0$ and $$\lambda_{gr}^*=\frac{2m(k-1)p_{\max}}{2m(k-1)p_{\max}+e^\ep}.$$ 
\end{itemize}
\end{corollary}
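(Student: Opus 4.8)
The plan is to read Corollary~\ref{cor:minimum_RG} as a two-variable constrained minimization: minimize the error coefficient $\beta_3$ of \eqref{eq:RG_error} over $(\lambda_{gr},\lambda_{vl})\in(0,1)\times[0,1-\tfrac{1}{2m})$ subject to the privacy budget $\ep_\RG\le\ep$, with $\ep_\RG$ given by \eqref{eq:priv_RG}. The first step is to unpack the $\max$ in \eqref{eq:priv_RG} into the two scalar inequalities
\begin{equation*}
\text{(A)}\quad \beta_1\bigl(p_{\max}\beta_2+\lambda_{vl}\bigr)\le e^{\ep}, \qquad
\text{(B)}\quad \beta_1\bigl(p_{\min}\beta_2+\lambda_{vl}\bigr)\ge e^{-\ep},
\end{equation*}
where $\beta_1=\frac{2m(k-1)(1-\lambda_{gr})}{(2m-1)\lambda_{gr}}$ and $\beta_2=2m(1-\lambda_{vl})-1$. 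Since $\beta_1$ is a strictly decreasing bijection of $\lambda_{gr}$ onto $(0,\infty)$, I would optimize over the cleaner pair $(\beta_1,\lambda_{vl})$ and recover $\lambda_{gr}^*$ at the end via $\lambda_{gr}=\frac{2m(k-1)}{\beta_1(2m-1)+2m(k-1)}$.

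The second step is a monotonicity analysis of the objective. A short check on \eqref{eq:RG_error} shows $\beta_3$ is strictly decreasing in $\beta_1$ (equivalently increasing in $\lambda_{gr}$), so we want $\beta_1$ as large as possible; as (A) is the only upper bound on $\beta_1$ while (B) is merely a lower bound, constraint~(A) must be tight at the optimum, giving $\beta_1=\frac{e^{\ep}}{p_{\max}\beta_2+\lambda_{vl}}$. Substituting this collapses the problem to a single-variable minimization over $\lambda_{vl}$ whose feasibility is governed by (B). Inserting the tight $\beta_1$ into (B) and evaluating at $\lambda_{vl}=0$ (where $\beta_2=2m-1$) shows that $\lambda_{vl}=0$ is feasible exactly when $e^{2\ep}\ge p_{\max}/p_{\min}$. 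This is precisely the dichotomy of the statement; the degenerate uniform case $p_{\max}=p_{\min}=\frac1{2m}$ falls on this side since then $p_{\max}/p_{\min}=1$.

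The third step splits into the two regimes. When $e^{2\ep}\ge p_{\max}/p_{\min}$ (or $p_{\max}=\frac1{2m}$), $\lambda_{vl}=0$ is feasible, and because value noise only injects variance the reduced error is minimized at $\lambda_{vl}^*=0$; keeping (A) tight with $\beta_2=2m-1$ yields $\beta_1=\frac{e^{\ep}}{p_{\max}(2m-1)}$, hence $\lambda_{gr}^*=\frac{2m(k-1)p_{\max}}{2m(k-1)p_{\max}+e^{\ep}}$, matching the statement. When $e^{2\ep}< p_{\max}/p_{\min}$, the point $\lambda_{vl}=0$ violates (B), so the user is forced to randomize the value to reach the target privacy. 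Arguing that the reduced error stays increasing in $\lambda_{vl}$ on the feasible set, the optimum sits at the smallest feasible $\lambda_{vl}$, i.e.\ where (B) is also tight. Solving the system with both (A) and (B) at equality is clean: dividing the two equalities gives $\frac{p_{\max}\beta_2+\lambda_{vl}}{p_{\min}\beta_2+\lambda_{vl}}=e^{2\ep}$, a linear equation in $\lambda_{vl}$ once $\beta_2=2m-1-2m\lambda_{vl}$ is inserted, producing $\lambda_{vl}^*$ with numerator $(2m-1)(p_{\max}-e^{2\ep}p_{\min})$; subtracting the two equalities gives $\beta_1=\frac{e^{\ep}-e^{-\ep}}{(p_{\max}-p_{\min})\beta_2}$, from which $\lambda_{gr}^*$ follows.

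The main obstacle is the monotonicity claim underlying the third step: along the curve where (A) is tight, raising $\lambda_{vl}$ both adds value noise (raising $\beta_3$) and relaxes the group randomization, i.e.\ increases $\beta_1$ and decreases $\lambda_{gr}$ (lowering $\beta_3$), so the net sign is not obvious and must be extracted from the explicit form of $\beta_3$ in \eqref{eq:RG_error}. This competition is controlled by the sign of $1-2mp_{\max}$, which is why the case $p_{\max}=\frac1{2m}$ — where this derivative vanishes and the distribution is necessarily uniform, decoupling group from value — is singled out; note $p_{\max}\ge\frac1{2m}$ always holds since $p_{\max}$ is the maximum of a probability vector over $2m$ outcomes. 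Establishing that the net effect raises the error on the feasible region, so that using the minimum value noise needed for feasibility is optimal, is the delicate computation where I would concentrate the effort; the remaining algebra of solving the tight-constraint system and back-substituting for $\lambda_{gr}^*$ is routine.
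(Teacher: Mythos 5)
Your proposal is correct and, at its core, takes the same route as the paper's proof: make the privacy constraint tight, collapse to a one-variable problem in $\lambda_{vl}$, argue the reduced error is increasing in $\lambda_{vl}$ so the optimum sits at the smallest feasible value, and back-substitute for $\lambda_{gr}^*$; your dichotomy $e^{2\ep}\gtrless p_{\max}/p_{\min}$ and the resulting formulas agree with the paper's derivation (indeed your denominators $(2mp_{\max}-1)+(1-2mp_{\min})e^{2\ep}$ match the appendix computation; the printed corollary drops some factors of $m$). The one organizational difference is worth noting: the paper imposes equality on the max and solves \emph{two} subproblems, one for each term of the max being active, then compares their optima at the end, whereas you observe that the error is strictly decreasing in $\beta_1$ and that (A) is the only upper bound on $\beta_1$, so (A) must bind at any optimum. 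This eliminates the paper's second subproblem and the final comparison, and is a modest but genuine streamlining.

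The only place your write-up stops short is the step you yourself flag as delicate: monotonicity of the reduced error in $\lambda_{vl}$ along the curve where (A) is tight. This is exactly the claim the paper asserts without computation (``$f(\lambda_{gr}^{(1)},\lambda_{vl})$ is increasing in $\lambda_{vl}$''), and it does hold; the competition you worry about dissolves under the right substitution, with no case analysis on the sign of $1-2mp_{\max}$. With (A) tight one has $\frac{\lambda_{gr}}{1-\lambda_{gr}}=\frac{2m(k-1)(p_{\max}\beta_2+\lambda_{vl})}{(2m-1)e^{\ep}}$, and using $\frac{p_{\max}\beta_2+\lambda_{vl}}{\beta_2}=p_{\max}+\frac{\lambda_{vl}}{\beta_2}$, the error \eqref{eq:MSE_R_2} scaled by $n$ becomes
\begin{align*}
n\,\Er_\RG
&=\left(\frac{2m\lambda_{vl}}{\beta_2}+\frac{2m(k-1)}{e^{\ep}}\left(p_{\max}+\frac{\lambda_{vl}}{\beta_2}\right)\right)\\
&\quad\times\left(\E[V_1^2]+\frac{(4m^2-1)(m+1)}{6}\left(\frac{1}{\beta_2}+\frac{2m(k-1)}{(2m-1)e^{\ep}}\left(p_{\max}+\frac{\lambda_{vl}}{\beta_2}\right)\right)\right),
\end{align*}
and since $\beta_2=2m-1-2m\lambda_{vl}>0$ is decreasing in $\lambda_{vl}$, both $\lambda_{vl}/\beta_2$ and $1/\beta_2$ are increasing, so each factor above is positive and increasing in $\lambda_{vl}$. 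This single computation justifies both of your cases at once: it gives $\lambda_{vl}^*=0$ whenever that point is feasible (your ``value noise only injects variance'' heuristic), and otherwise pushes the optimum to the smallest feasible $\lambda_{vl}$, where (B) also binds. With this display inserted, your argument is complete.
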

\begin{proof}
See Appendix \ref{app:comp_A_C}.
\end{proof}

The above Corollary describes the choice of parameters $\lambda_{vl}$ and $\lambda_{gr}$ that minimize the error for a given required privacy $\ep>0$. It also shows that for a high privacy requirement, \Ie privacy parameter $\ep<\frac{1}{2}\ln\left(\frac{p_{\max}}{p_{\min}}\right)$, the parameter $\lambda_{vl}$ cannot be zero.
 Intuitively, since the user's value and group are correlated,  applying a privacy preserving mechanism only over the group is not always enough. This is similar to what we had with the Q\&A scheme. The parameter $\lambda_{vl}$
characterizes the second layer of privacy that hides the user's value.

\section{Comparison of the RG and Q\&A Schemes}
\label{sec:comparisons}

The  Q\&A scheme has a  communication cost  of $\log(2m)$ bits per user, \Ie it does not depend on the number of groups $k$.
However, the communication cost of the RG scheme is $\log(2km)$ bits per user. Thus, the Q\&A scheme outperforms the RG scheme in terms of communication cost per user.


To compare the two schemes on all fronts, 
  we fix the total communication cost,  \Ie the number of bits communicated by all the users to the server, and compare the privacy vs. accuracy trade-offs. 
  We choose the parameter $\lambda$ of the Q\&A scheme that guarantees the required (given) privacy parameter $\ep$ (see Remark \ref{rem:choice_of_rho}) and minimizes the error. We find this parameter $\lambda$ by solving the following optimization problem numerically
\begin{equation*}
\begin{aligned}
& \underset{\lambda}{\text{minimize}}
& & \Er_\QA\\
& \text{subject to}
& & e^\ep\hspace{-1.5pt}\leq\hspace{-1.5pt} \max_{\substack{{v,v'\in\cV},\\{g,g'\in\cG,g'\neq g}}}\left\{\frac{(2m(1-\lambda)-1)p_{g}(v)+\lambda}{(2m(1-\lambda)-1)p_{g'}(v')+\lambda}\right\} ,\\
&&&0\leq\lambda<1-\frac{1}{2m},
\end{aligned}
\end{equation*}
where $\Er_\QA$ is from \eqref{eq:QA_error}.
Similarly, for the RG scheme, we choose the parameters, $\lambda_{gr}$ and $\lambda_{vl}$, as in Corollary \ref{cor:minimum_RG}.
  
Figure \ref{fig:comparisons}  illustrates this comparison for a fixed total communication cost.
Typically for a high enough privacy constraint, Q\&A outperforms RG, while for a low enough privacy constraint, RG outperforms Q\&A. Thus, we have two privacy regimes, a high privacy regime where  Q\&A  should be used, and a low privacy regime where  RG  should be used. 
These observations are made rigorous in Theorem \ref{thm:QA_vs_RG} below. 

 We begin by expressing the relative mean square error as a function of the privacy parameter $\ep$ for a fixed total communication cost of $b\geq 2$ bits. Since the Q\&A scheme's communication cost per user is $\log(2m)$ bits, its number of users is given by $n_\QA:= b/\log(2m)$. Analogously, the number of users for the RG scheme is given by $n_\RG:=b/\log(2km)$. 
 \footnote{We assume that the parameter $b$ is chosen such that $n_\QA,n_\RG\in \mathbb{N}$.}  
 Therefore, we normalize each scheme's mean square error by its respective number of users (squared), and we express 
\begin{equation}
\label{eq:QA_ER_ep_B}
    \Er_\QA(\ep,b)=\MSE(\hbS_\QA)n_\QA^{-2}=\MSE(\hbS_\QA)\left(\frac{\log(2m)}{b}\right)^{2},
\end{equation} 
and
\begin{equation}
\label{eq:RG_ER_ep_B}
    \Er_\RG(\ep,b)=\MSE(\hbS_\RG)n_\RG^{-2}=\MSE(\hbS_\RG)\left(\frac{\log(2km)}{b}\right)^{2}.
\end{equation}
With this notation we present the following theorem.

\begin{theorem}
\label{thm:QA_vs_RG}
Let $\cV=\{\pm 1,\dots,\pm m\}$ be the alphabet of values, and $\cG=[k]$ be the set of possible groups, and fix the total communication cost $b\in \{x\in\mathbb{N}| x/\log(2m), x/\log(2km)\in \mathbb{N}\}$. Unless $k=2$, $m=1$ and $p_1(v)= p_2(v')\neq 0.5$ for all $v,v'\in \{-1,1\}$; then, there exists, 
\begin{enumerate}[label={(\roman*)}]
    \item an $\ep_h>0$, such that for all $\ep<\ep_h$, the relative error $\Er_\QA(\ep,b)<\Er_\RG(\ep,b)$, and 
    \item an $\ep_\ell>0$, such that for all $\ep>\ep_\ell$, the relative error $\Er_\QA(\ep,b)>\Er_\RG(\ep,b)$.
\end{enumerate}
\end{theorem}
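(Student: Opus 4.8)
The plan is to reduce both relative errors to functions of $\ep$ alone, since the budget $b$ enters only as a common positive factor, and then to analyze the two extreme regimes separately. Combining \eqref{eq:QA_ER_ep_B} and \eqref{eq:RG_ER_ep_B} with $\Er_\QA=\alpha n^{-1}$ and $\Er_\RG=\beta_3 n^{-1}$ gives $\Er_\QA(\ep,b)=\alpha(\ep)\,\log(2m)/b$ and $\Er_\RG(\ep,b)=\beta_3(\ep)\,\log(2km)/b$, where $\alpha(\ep)$ and $\beta_3(\ep)$ are the per-user coefficients evaluated at the privacy-optimal parameters (the minimizing $\lambda$ for Q\&A, and $\lambda_{gr}^*,\lambda_{vl}^*$ of Corollary \ref{cor:minimum_RG} for RG). Because $b>0$ is common, proving (i) and (ii) is equivalent to comparing $\alpha(\ep)\log(2m)$ with $\beta_3(\ep)\log(2km)$.

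For the low-privacy claim (ii) I would let $\ep\to\infty$. For RG, Corollary \ref{cor:minimum_RG} eventually lands in the case $e^{2\ep}\ge p_{\max}/p_{\min}$, so $\lambda_{vl}^*=0$ and $\lambda_{gr}^*=\tfrac{2m(k-1)p_{\max}}{2m(k-1)p_{\max}+e^\ep}\to 0$; substituting into $\beta_3$ shows $\beta_3(\ep)=\Theta(e^{-\ep})\to 0$, hence $\Er_\RG(\ep,b)\to 0$. For Q\&A, once $\ep$ exceeds the intrinsic level \eqref{eq:intrinsic_privacy_QA} one may set $\lambda=0$, whence $\alpha(\ep)\to \tfrac{(2m+1)(m+1)(k-1)}{6}$, a strictly positive constant for $k\ge 2$ (the standing assumption $p_g(v)\in(0,1)$ keeps the intrinsic level finite). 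Thus $\Er_\QA(\ep,b)$ is bounded below by a positive constant while $\Er_\RG(\ep,b)\to 0$, and any $\ep_\ell$ past the crossover yields (ii). This part is insensitive to the excluded degenerate case.

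For the high-privacy claim (i) I would show that both coefficients diverge as $\ep\to 0$ and extract their leading order. Driving $\ep\to 0$ forces each scheme's value-randomization to the perfect-privacy boundary: $\lambda\to 1-\tfrac1{2m}$ for Q\&A and $\beta_2=2m(1-\lambda_{vl})-1\to 0$ for RG, as dictated by \eqref{eq:priv_C_general} and \eqref{eq:priv_RG}. Linearizing these privacy constraints at the boundary gives $2m(1-\lambda)-1=\Theta(\ep)$ and $\beta_2=\Theta(\ep)$, and since these quantities appear squared in the denominators of the dominant terms of $\alpha$ and $\beta_3$, one obtains $\alpha(\ep)\sim A\,\ep^{-2}$ and $\beta_3(\ep)\sim B\,\ep^{-2}$ with constants $A,B$ depending on $m$, $k$, and the value distributions $p_g$. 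Claim (i) then reduces to the single inequality $A\log(2m)<B\log(2km)$, which I would verify by direct computation.

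The main obstacle is precisely this leading-coefficient comparison. Computing $A$ for Q\&A is routine, but computing $B$ requires tracking the two coupled parameters of RG through the constrained optimization of Corollary \ref{cor:minimum_RG}; in particular, for $m>1$ the unconstrained optimizer can leave the feasible box $\lambda_{vl}\in[0,1-\tfrac1{2m})$ at very small $\ep$, so the boundary behavior $\lambda_{vl}\to 1-\tfrac1{2m}$ must be argued directly from \eqref{eq:priv_RG} rather than read off the formula. One then checks that $A\log(2m)<B\log(2km)$ holds strictly in all configurations except exactly $k=2$, $m=1$, $p_1(v)=p_2(v')\neq 0.5$, where a computation gives equality (concretely $\alpha\sim 2\beta_3$ while $\log(2km)=2\log(2m)$), so the two leading terms cancel and the sign of $\Er_\QA-\Er_\RG$ is governed by lower-order terms; this is the degenerate case the statement removes. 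Establishing the strictness of the inequality in general, and the exactness of this boundary, is the crux of the argument.
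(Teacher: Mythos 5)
Your part (ii) is correct and is exactly the paper's argument: past the intrinsic privacy level Q\&A runs with $\lambda=0$, giving the constant relative error $\log(2m)(2m+1)(m+1)(k-1)/(6b)$, while Corollary \ref{cor:minimum_RG} gives $\lambda_{vl}^*=0$ and $\lambda_{gr}^*=\Theta(e^{-\ep})$, so $\Er_\RG(\ep,b)\to 0$. Your strategy for part (i) is also the paper's in spirit (linearize both privacy constraints at the perfect-privacy boundary and compare the $\ep^{-2}$ coefficients); the only methodological difference is that the paper tracks the exact binding $\lambda$ only in the case $k=2$, $m=1$, and otherwise substitutes the conservative $\lambda$ of Remark \ref{rem:choice_of_rho} built from the global $p_{\max},p_{\min}$, whereas you propose to track the optimizers throughout. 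Your account of the excluded case ($\alpha\sim 2\beta_3$ together with $\log(2km)=2\log(2m)$, so the leading terms cancel exactly) is correct.

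The gap is the step you defer to ``direct computation'': that check, when actually carried out, fails, and not only in the excluded case. The linearization gives $t:=2m(1-\lambda)-1\approx\tfrac{(2m-1)\ep}{2mD_\QA}$, where $D_\QA:=\max_{g\neq g'}\bigl(p_g(v)-p_{g'}(v')\bigr)$ because the max in \eqref{eq:priv_C_general} is restricted to distinct groups. For RG, however, the two constraints in \eqref{eq:priv_RG} can be simultaneously centered by choosing $\lambda_{gr}$ (equivalently $\beta_1$), so the binding requirement on the value noise is only $\tfrac{p_{\max}\beta_2+\lambda_{vl}}{p_{\min}\beta_2+\lambda_{vl}}\leq e^{2\ep}$, whence $\beta_2\approx\tfrac{(2m-1)\ep}{m D_\RG}$ with the \emph{global} gap $D_\RG:=p_{\max}-p_{\min}$. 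Since the dominant coefficients are $\alpha\sim\tfrac{(4m^2-1)(m+1)(2m-1)k}{6t^2}$ and $\beta_3\sim\tfrac{(4m^2-1)(m+1)(2m-1)k^2}{6\beta_2^2}$, one gets
\begin{equation*}
\lim_{\ep\to 0}\frac{\Er_\QA(\ep,b)}{\Er_\RG(\ep,b)}
=\frac{A\log(2m)}{B\log(2km)}
=\frac{4D_\QA^2\,\log(2m)}{k\,D_\RG^2\,\log(2km)}.
\end{equation*}
Now take $k=2$, $m=2$, and any distributions whose global maximum and minimum are attained in different groups (e.g.\ $p_2(v)=p_1(-v)$ with $p_1=(0.4,0.3,0.2,0.1)$): then $D_\QA=D_\RG$ and the ratio equals $\tfrac{4\log 4}{2\log 8}=\tfrac43>1$, i.e.\ RG beats Q\&A for all sufficiently small $\ep$ --- yet this configuration is not excluded by the theorem. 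So the inequality you assert holds ``in all configurations except $k=2$, $m=1$, mirrored'' is false, and your route cannot be completed as described; indeed the computation shows claim (i) itself fails there, so the difficulty is not one of proof technique.

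To be fair, the paper's own sketch has the same hole at the same spot. Its conservative-$\lambda$ route makes the $D$'s cancel and reduces (i) to the purely combinatorial inequality $4\log(2m)<k\log(2km)$, which holds for $k\geq 4$, and for $k=3$ with $m\leq 13$, but fails for $k=2$ with any $m\geq 2$ (equality at $m=1$) and for $k=3$ with $m\geq 14$; the paper simply asserts that ``the bound on the limit of the difference is positive'' without exhibiting the computation. (Moreover, the bound \eqref{eq:QA_rho_2} as printed is not even feasible for $m\geq 2$ at small $\ep$; its numerator should read $(2m-1)(c_{\max}-c_{\min}e^\ep)$.) The argument --- yours or the paper's --- is genuinely sound only where the leading-order comparison comes out right: for $m=1$ (where $D_\QA<D_\RG$ strictly outside the excluded case) and for the regimes where $4\log(2m)<k\log(2km)$.
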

\begin{proof}
See Appendix \ref{app:comp_A_C}.
\end{proof}

For the special case of $k=2$ groups, and a binary alphabet of values, \Ie $\cV=\{-1,1\}$, and  $p_1(v)= p_2(v')\neq 0.5 $  for some $v,v'\in\cV$, there exists an $\ep_0>0$ such that for all $\ep<\ep_0$, the difference in relative errors $\Er_\QA(\ep_0,n_\QA)-\Er_\RG(\ep_0,n_\RG)=\frac{1}{b}$, where $b$ is the total communication cost. 



}


\section{Conclusion}
\label{sec:conclusion}
In this paper, we formulated the problem of private multi-group  aggregation where the goal was to privately aggregate the users' values per group. Moreover, we used local differential privacy as our measure of privacy for a user's group. We characterized two schemes: Q\&A and RG. The Q\&A scheme generally outperformed the RG scheme, in terms of privacy vs. accuracy, in the high privacy regime.

Future work for this problem includes finding theoretic bounds characterizing the best performance achievable for a given privacy and total communication cost. 
Another direction would involve mapping a larger alphabet of values $\cV$ to a smaller alphabet $\cV'$ to reduce communication costs.

\appendices
\section{The Q\&A Scheme: Proof of Theorem \ref{thm:QA}}

\label{app:proof_thm_1}
We separate the proof into three parts starting with communication, then privacy, and finally with the accuracy.
\begin{enumerate}[wide, labelindent=0pt]
\item \textbf{Communication:} The user sends the server the index of a column of the query matrix. Since the query matrix has dimension $k\times 2m$, the user sends $\log(2m)$ bits to the server. 

\item \textbf{Privacy:} From Definition \ref{eq:privacy}, the privacy of user $i$ is
\begin{align}
\label{eq:schemeC_general_privacy_proof}
    e^{\ep_{\QA}}&  =\max\left\{\max\limits_{\substack{{g,g'\in\cG,g\neq g',}\\ {{a}\in[2m],q\in\cQ}}} \frac{\Pr\left({A}_i={a}\big|G_i=g,Q_i=q\right)}{\Pr\left({A}_i={a}\big|G_i=g',Q_i=q\right)},1\right\}\nonumber\\
    & =\max\limits_{\substack{{g,g'\in\cG,g\neq g',}\\ {{a}\in[2m],q\in\cQ}}} \frac{\Pr\left({A}_i={a}\big|G_i=g,Q_i=q\right)}{\Pr\left({A}_i={a}\big|G_i=g',Q_i=q\right)},
\end{align}
where $\cQ$ is as defined in \eqref{eq:def_cQ}. Notice that if $g=g'$, the ratio of probabilities is equal to $1$, and if $g\neq g'$, the maximum of the ratio of probabilities is greater than or equal to $1$. Consider 
\begin{align}
    &\Pr\big({A}_i={a}\big|G_i=g,Q_i=q\big)\nonumber\\
    &\utag{a}{=}\sum_{v\in\cV}\Pr(V_i=v\big|G_i=g)\sum_{\mathring{v}\in\cV}\Pr\left(\mathring{V}_i=\mathring{v}\big|V_i=v\right)\nonumber\\
    &\hspace{90pt}\Pr\left({A}_i={a}\big|G_i=g,Q_i=q,\mathring{V}_i=\mathring{v}\right)\nonumber\\
    &\utag{b}{=}\sum_{v\in\cV}p_g(v)\Pr\left(\mathring{V}_i={v}^*\big|V_i=v\right)\nonumber\\
    &\textstyle\utag{c}{=}(1-\lambda)p_g(v^*)+\frac{\lambda}{2m-1}(1-p_g(v^*)),\label{eq:proof_prob_A}
\end{align}
where \uref{a} follows from the law of total probability and the random variable relationships.
As for \uref{b}, it follows from definition $p_g(v):=\Pr({V}_i={v}|G_i=g)$, and noticing that given a user's randomized value $\mathring{V}_i$, his group $G_i$, and assigned query $Q_i$, the user's answer $A_i$ is deterministic. So, the probability $\Pr\left({A}_i={a}\big|G_i=g,Q_i=q,\mathring{V}_i=\mathring{v}\right)=1$ only for one realization of  $\mathring{V}_i$ which we denote by ${v}^*=q(g,a)$, otherwise $\Pr\left({A}_i={a}\big|G_i=g,Q_i=q,\mathring{V}_i=\mathring{v}\right)=0$. Finally, \uref{c} follows from \eqref{eq:dist_V_ring}.
Substituting \eqref{eq:proof_prob_A} in \eqref{eq:schemeC_general_privacy_proof}, we obtain
\begin{align*}
    e^{\ep_{\QA}}= \max_{\substack{{g,g'\in\cG,g'\neq g,}\\ {v,v'\in\cV}}}\left\{\frac{(2m(1-\lambda)-1)p_{g}(v)+\lambda}{(2m(1-\lambda)-1)p_{g'}(v')+\lambda}\right\},
\end{align*}
where we replaced $v^*,{v'}^*\in\cV$ by $v,v'\in\cV$.

\item \textbf{Accuracy:} We start by finding probabilities relating to the user's assigned queries. User $i$ is assigned query $Q_i=q$, which is chosen uniformly at random from the set $\cQ$ defined in \eqref{eq:def_cQ}. Therefore, for fixed row $j$ and column $a$, the probability $\Pr(Q_i(j,a)=v)=\frac{1}{2m}$ for all $v\in\cV$. Note that if user $i$'s answer is $A_i$, and given his assigned query, the server maps the user's answer into the vector $Q_i(:,A_i)$. 
Given user $i$'s group $G_i=g$ and group $V_i=v$, we find the distribution of $Q_i(j,A_i)$ for all $j\in[k]$. 

That is, for all $j\neq g$, $j,g\in\cG$, and $v,v'\in\cV$, we have
\begin{align*}
    &\Pr\left(Q_i(j,A_i)=v'\big|G_i=g,V_i=v\right)=\frac{1}{2m},
\end{align*}
Otherwise, for all $j=g$, $j,g\in\cG$, and $v,v'\in\cV$,
\begin{multline}
\label{eq:q_equal}
    \Pr\left(Q_i(g,A_i)=v'\big|G_i=g,V_i=v\right)\\=\begin{cases}
    1-\lambda &\textnormal{for } v'=v\\
    \frac{\lambda}{2m-1} &\textnormal{for } v'\in\cV-\{v\}.
    \end{cases}
\end{multline}

For all $i\in[n]$, we introduce the auxiliary random variables $X_i$ and $Y_i$ for ease of notation.
For all $i\in[n]$, user $i$'s group $G_i$ and his value $V_i$ are random variables as described in Section \ref{sec:problem_formulation}. We  define an auxiliary random variable $X_i$ that functions as an indicator for both the user's group and value. More precisely,  $X_i$ is a random $k$ dimensional vector (where $k$ is the number of groups), such that $X_i(j)=0$ if $j\neq G_i$ and $X_i(j)=V_i$ if $j=G_i$.

Then one readily obtains, for all $j\in[k]$,
\begin{equation}
\label{eq:def_X}
    \Pr(X_i(j)=v)=\begin{cases}
    \theta\ell p_j(v)& \text{ for } v\in\cV,\\
    1-\theta_j& \text{ for } v=0,
    \end{cases}
\end{equation}
and, 
\begin{equation}
\label{eq:exp_sum_X_i}
    \E\left[X_i(j)\right]=\sum_{v\in\cV} vp_j(v)\theta_j=\E[V_1|G_1=j]\theta_j.
\end{equation}
Since the $X_i$'s are i.i.d. for all $i\in[n]$, we have 
\begin{multline}
\label{eq:sum_X_i}
    \E\left[\left(\sum_{i\in[n]}X_i(j)\right)^2\right]\\=n\E[V_1^2|G_1=j]\theta_j+(n^2-n)\E[V_1|G_1=j]^2\theta_j^2.
\end{multline}

For every user $i\in[n]$, we define an auxiliary random variable $Y_i=Q_i(:,A_i)$, which is a  $k$ dimensional random vector. Given user $i$'s group $G_i=g_i$ and his value $V_i=v_i$, the ${g_i}^\textnormal{th}$ coordinate of the vector $Y_i$ contains user $i$'s randomized value. All the other coordinates of the vector $Y_i$ are randomly chosen from the alphabet $\cV$. More precisely,
\begin{multline}
\label{eq:dist_Y_i}
    \Pr(Y_i(j)=v|G_i=g_i,V_i=v_i)\\=\begin{cases}
    1-\lambda& \text{ for } v=v_i \text{ and } j=g_i,\\
    \frac{\lambda}{2m-1}& \text{ for } v\in\cV-\{v_i\} \text{ and } j=g_i,\\
    \frac{1}{2m}& \text{ for } v\in\cV \text{ and } j\neq g_i,\\
    \end{cases}
\end{multline}
where $\lambda\in\left[0,1-\frac{1}{2m}\right)$. Then, following from \eqref{eq:dist_Y_i}, we obtain,
$
    \Pr(Y_i(j)=v)=\theta_j\left((1-\lambda)  p_j(v)+\frac{(1-p_j(v))\lambda}{2m-1}\right)+
    \frac{(1-\theta_j)}{2m}.
$
Then, 
\begin{align}
     \E\left[Y_i(j)\right] =\frac{2m-2m\lambda-1}{2m-1}\E[V_1|G_1=j]\theta_j, \label{eq:lem_proof_E_1}
\end{align}
Since $Y_1,\dots,Y_n$ are i.i.d., we have 
\begin{multline}
\label{eq:proof_exp_Y_i_squared}
    \E\left[\left(\sum_{i\in[n]}Y_i(j)\right)^2\right]= \frac{n(1-\theta_j)}{2m}+\frac{n\lambda\theta_j}{2m-1}\sum_{v\in\cV}v^2
    \\+(n^2-n)\left(\frac{2m-2m\lambda-1}{2m-1}\theta_j\E[V_1|G_1=j]\right)^2
    \\+n\E[V_1^2|G_1=j]\theta_j\left(\frac{2m-2m\lambda-1}{2m-1}\right).
\end{multline}
Note that $\sum_{v\in\cV}v^2=\frac{1}{3}m(m+1)(2m+1)$.

One readily obtains 
$
    \E\left[\hbS_\QA-\bS\right]=0
$
by substituting \eqref{eq:exp_sum_X_i} and \eqref{eq:lem_proof_E_1}, and observing that $X_1,\dots,X_n$ are i.i.d. and $Y_1,\dots,Y_n$ are i.i.d.. Then the estimator $\hbS_\QA$ is unbiased.

Next we calculate the expectation $\E\left[\sum_{i=1}^nY_i(j)\sum_{l=1}^nX_l(j)\right]$ which will be helpful later in the proof. Notice that given user $i$'s group, $G_i=g_i$, and value, $V_i=v_i$, the product $X_i(g_i)Y_i(g_i)$ is equal to $v_i^2$ with probability $\lambda$, and equal to $v_iv$ with probability $\frac{1-\lambda}{2m-1}$ for all $v\in\cV-\{v_i\}$. And since $X_i(j)=0$ for all $j\neq g_i$, then $X_i(j)Y_i(j)=0$ for all $j\neq g_i$. Following these observations,
\begin{align}
    \E[X_i(j)Y_i(j)]=\frac{2m-2m\lambda-1}{2m-1}\theta_j \E[V_i^2|G_i=g],\label{eq:proof_X_i_times_Y_i}
\end{align}
which follows from $\sum\limits_{v'\in\cV-\{v\}}v'=-v$. Then,
\begin{align}
    & \E\left[\sum\limits_{i=1}^nY_i(j)\sum\limits_{l=1}^nX_l(j)\right]\nonumber\\
    & \utag{a}{=}\sum\limits_{i,l\in[n],i\neq l}\E\left[Y_i(j)\right]\E\left[X_l(j)\right]+\sum\limits_{i=1}^n \E\left[Y_i(j)X_i(j)\right] \nonumber\\
    &\utag{b}{=}\frac{2m-2m\lambda-1}{2m-1}\Big[(n^2-n)\theta_j^2\E[V_1|G_1=j]^2\nonumber
    \\&\hspace{120pt}+n\theta_j\E[V_1^2|G_1=j]\Big].\label{eq:proof_X_i_times_Y_i_2}
\end{align}
We have that \uref{a} follows from that fact that if $i\neq l$, then $Y_i(j)$ is independent of $X_l(j)$. And \uref{b} follows from substituting \eqref{eq:exp_sum_X_i}, \eqref{eq:lem_proof_E_1}, and \eqref{eq:proof_X_i_times_Y_i}. Then,
\begin{multline}
    \Er_{\QA}=\frac{2m\lambda}{n(2m-2m\lambda-1)}\E[V_1^2]\\
   +\frac{(4m^2-1)(m+1)}{6n(2m-2m\lambda-1)}\left(\frac{(2m-1)k}{2m-2m\lambda-1}-1\right),\label{eq:C_MSE_genera_proof}
\end{multline}
follows from substituting \eqref{eq:sum_X_i}, \eqref{eq:proof_exp_Y_i_squared}, and \eqref{eq:proof_X_i_times_Y_i_2} in \eqref{eq:error_def}.

Now we have the exact expression of $\Er_{\QA}$ in terms of $\lambda$. To characterize the accuracy vs. privacy trade-off, one might be interested in the expression of $\Er_{\QA}$ in terms of the privacy parameter $\ep$.
We can get a loose upper bound on $\Er_{\QA}$, as a function of $\ep$, by substituting $\lambda=(2m-1)/(2m-1+e^\ep)$, from Remark \ref{rem:choice_of_rho}, in \eqref{eq:C_MSE_genera_proof}, and we get $\Er_{\QA}=\cO\left(\frac{km^4}{n}\right)$.

\end{enumerate}


\section{The Randomized Group (RG) Scheme}
In Section \ref{app:proof_thm_RG} of this appendix we prove Theorem \ref{thm:SchemeR}, and in Section \ref{app:proof_co_minimum_RG} we prove Corollary \ref{cor:minimum_RG}.

\subsection{Proof of Theorem \ref{thm:SchemeR}}
\label{app:proof_thm_RG}

We separate the proof into three parts starting with communication, then privacy, and finally the accuracy.

\begin{enumerate}[wide, labelindent=0pt]
\item \textbf{Communication}: Each user $i$ sends the server an answer $a_i$, which is a $2$ dimensional vector. The first coordinate has information about the user's group, \Ie $\mathring{g}\in\cG$, and the second coordinate has information about the user's value, \Ie $\mathring{v}\in\cV$. Therefore, to represent the user's answer, $a_i$, we need $\log(|\cV|)+\log(|\cG|)=1+\log(m)+\log(k)$ bits.

\item \textbf{Privacy}: To prove \eqref{eq:priv_RG}, we consider user $i$ and look at the distribution $\Pr(A_i=a|G_i=g)=\Pr\left(\mathring{G}_i=g',\mathring{V}_i=v|G_i=g\right)$, for all $v\in\cV$ and $g,g'\in\cG$. We separately consider the two cases of $g'=g$ and $g'\neq g$ mirroring the two cases described in Section \ref{sec:Scheme_R}. 
For all $v\in\cV$ and $g,g'\in\cG$ such that $g'= g$,
\begin{multline}
 \Pr\left(\mathring{G}_i=g,\mathring{V}_i=v|G_i=g\right)\\
    =\left((1-\lambda_{vl})p_g(v)+\frac{(1-p_g(v))\lambda_{vl}}{2m-1}\right)(1-\lambda_{gr}),\label{eq:RG_privacy_proof_1}
\end{multline}
which follows from \eqref{eq:RG_def_G_round}, \eqref{eq:RG_def_V_round}, and
\begin{align*}
    \Pr\left(\mathring{V}_i=v|\mathring{G}_i=g,G_i=g\right)\hspace{-1pt}
    =\hspace{-1pt}(1-\lambda_{vl})p_g(v)+\frac{(1-p_g(v))\lambda_{vl}}{2m-1}
\end{align*}
However, for all $v\in\cV$ and $g,g'\in\cG$ such that $g'\neq g$, 
\begin{align} 
\label{eq:RG_privacy_proof_2}
\Pr\left(\mathring{G}_i=g',\mathring{V}_i=v|G_i=g\right)=\frac{\lambda_{gr}}{2m(k-1)},
\end{align}
which follows from \eqref{eq:RG_def_G_round} and \eqref{eq:RG_def_V_round_neq}.

From Definition \ref{def:privacy}, we drop $Q$ from the conditioning because there are no queries assigned to the users in this scheme, also the conditioning on $P$ and $\Theta$ is implicit. Therefore, by substituting \eqref{eq:RG_privacy_proof_1} and \eqref{eq:RG_privacy_proof_2} in \eqref{eq:privacy}, we get 
\begin{multline*}
    e^{\ep_\RG}= \max\bigg\{\rho\max_{g\in\cG,v\in\cV}p_{g}(v)(2m(1-\lambda_{vl})-1)+\lambda_{vl},\\
    \left(\rho\min\limits_{g\in\cG,v\in\cV}p_{g}(v)(2m(1-\lambda_{vl})-1)+\lambda_{vl}\right)^{-1}\bigg\},
\end{multline*}
where $\rho=\frac{2m(k-1)(1-\lambda_{gr})}{\lambda_{gr}}$.

\item \textbf{Accuracy}: 
For all $i\in[n]$, user $i$ sends the server the answer $A_i=\left(\mathring{G}_i,\mathring{V}_i\right)$, where the user's randomized group, $\mathring{G}_i$, is described in \eqref{eq:RG_def_G_round}, and his randomized value, $\mathring{V}_i$, is described in equations \eqref{eq:RG_def_V_round} and \eqref{eq:RG_def_V_round_neq}. 

We define an auxiliary random variable $Z_i$ that functions as an indicator for both user $i$'s randomized group and randomized value. More precisely,  $Z_i$ is a random $k$ dimensional vector (where $k$ is the number of groups), such that $Z_i(j)=V_i$ if $j= \mathring{G}_i$ and $Z_i(j)=0$ otherwise, \Ie $j\neq G_i$.
For all $j\in[k]$ and $v\in\cV$, one readily obtains
\begin{multline}
    \Pr(Z_i(j)=v)=\frac{(1-\theta_g)\lambda_{gr}}{2m(k-1)}
    \\ +\theta_g(1-\lambda_{gr})\left((1-\lambda_{vl})p_j(v)+\frac{(1-p_j(v))\lambda_{vl}}{2m-1}\right).
    \label{eq:prob_Z}
\end{multline}

Since $Z_1,Z_2,\dots,Z_n$ are i.i.d., then following from \eqref{eq:prob_Z} for all $j\in[k]$ and $i\in[n]$ the expectation
\begin{multline}
\label{eq:exp_Z}
    \E\left[\sum_{i\in[n]} Z_i(j)\right]\\
    =\frac{n\theta_g(1-\lambda_{gr})(2m(1-\lambda_{vl})-1)}{2m-1}\E[V_1|G_1=j].
\end{multline}
Then, 
$
    \E\left[\hbS_\RG-\hbS_\QA\right]=0,
$
which follows from \eqref{eq:exp_sum_X_i} and \eqref{eq:exp_Z}. Thus, the estimator of the RG scheme is unbiased.
Moreover,
\small
\begin{multline}
    \E\left[\left( \frac{2m-1}{(1-\lambda_{gr})(2m(1-\lambda_{vl})-1)}\sum_{i=1}^n Z_i(j)\right)^2\right]
    \\=\frac{n(4m^2-1)[2m\theta_g(k-1)(1-\lambda_{gr})\lambda_{vl}
    +\lambda_{gr}(2m-1)(1-\theta_g)]}{6(k-1)(1-\lambda_{gr})^2(2m(1-\lambda_{vl})-1)^2(m+1)^{-1}}\\
    +\frac{n\theta_g(2m-1)}{(1-\lambda_{gr})(2m(1-\lambda_{vl})-1)}\E[V_1^2|G_1=j]\\
    +(n^2-n)\theta_g^2\E[V_1|G_1=j]^2.
    \label{eq:sum_Z_squared}
\end{multline}
\normalsize

Consider the random variables $X_i$, for all $i\in[n]$, described in \eqref{eq:def_X}. For all $i\in[n]$ and $j\in[k]$, notice that the product  $X_i(j)Z_i(j)$ can take on one of these values: 
\begin{align*}
    &X_i(j)Z_i(j)
    \\&=\begin{cases}
    v^2 & \forall v\in\cV \textnormal{, if } X_i(j)=Z_i(j)=v,\\
    vv' & \forall v,v'\in\cV, v'\neq v  \textnormal{, if } X_i(j)=v, \textnormal{ and } Z_i(j)=v',\\
    0 &  \forall v,v'\in\cV  \textnormal{, if } X_i(j)=0 \textnormal{ or } Z_i(j)=0,\\
    \end{cases}
\end{align*}
Then, we can use this to find the expectation
\begin{multline}
\label{eq:exp_X_Z}
    \E\left[X_i(j)Z_i(j)\right]\\=\frac{2m(1-\lambda_{vr})-1}{2m-1}\E\left[V_i^2|G_i=j\right]\theta_j (1-\lambda_{gr}).
\end{multline}

Moreover, since $Z_1,\dots,Z_n$ are i.i.d., $X_1,\dots,X_n$ are i.i.d., and $Z_i$ is independent of $X_\ell$ if $i\neq \ell$, 
\begin{align}
\label{eq:sum_X_Z}
    &\E\left[\sum_{i\in [n]} Z_i(j)\sum_{\ell\in[n]} X_\ell(j)\right]\nonumber\\
    &=\sum_{i\in [n]}\E\left[Z_i(j) X_i(j)\right]+(n^2-n)\hspace{-4pt}\sum_{i,\ell\in [n], i\neq \ell}\hspace{-4pt}\E\left[ Z_i(j)\right]\left[X_\ell(j)\right]\nonumber\\
    &=\frac{(n^2-n)(1-\lambda_{gr})(2m(1-\lambda_{vl})-1)}{2m-1}\E[V_1|G_1=j]^2\theta_g^2\nonumber\\
    &\hspace{60pt}+n(1-\lambda_{gr})(1-\lambda_{vl})\E[V_1^2|G_1=j]\theta_g,
\end{align}
which follows from the substitution of \eqref{eq:exp_sum_X_i}, \eqref{eq:exp_Z}, and \eqref{eq:exp_X_Z}.

Notice that $\hbS_\RG(j)= \frac{2m-1}{(1-\lambda_{gr})(2m(1-\lambda_{vl})-1)}\sum_{i=1}^n Z_i(j)$, and $\bS(j)= \sum_{i=1}^n X_i(j)$. This implies in
\begin{multline}
    \Er_\RG
    =\frac{2m\lambda_{vl}(1-\lambda_{gr})+\lambda_{gr}(2m-1)}{n(1-\lambda_{gr})(2m(1-\lambda_{vl})-1)}\Bigg(\E[V_1^2] \\
+\frac{(4m^2-1)(m+1)}{6(1-\lambda_{gr})(2m(1-\lambda_{vl})-1)}\Bigg),\label{eq:MSE_R_2}
\end{multline}
which follows from substituting \eqref{eq:sum_X_i}, \eqref{eq:sum_Z_squared}, \eqref{eq:sum_X_Z} in \eqref{eq:error_def}, and noting that $\sum_{v\in\cV}v^2=\frac{1}{3}m(m+1)(2m+1)$. 

This proves how we obtained the expression of $\Er_\RG$ as a function of $\lambda_{gr}$ and $\lambda_{vl}$. 
Moreover, one could be interested in the error as a function of a given required privacy $\ep>0$. We give an upper bound of $\Er_\RG$ as a function of $\ep$. We substitute $\lambda_{gr}$ and $\lambda_{vl}$ that minimize the error from Corollary \ref{cor:minimum_RG}, in \eqref{eq:MSE_R_2}.
Then, the error $\Er_\RG$ is  upper bounded by $\cO\left(\frac{m^4k^2}{ne^\ep}\right)$. 
\end{enumerate}

\subsection{Proof of Corollary \ref{cor:minimum_RG}}
\label{app:proof_co_minimum_RG}

We first assume that $p_{\max}> \frac{1}{2m}$ and $p_{\min}< \frac{1}{2m}$, and consider the special case of $p_{\max}=p_{\min}=\frac{1}{2m}$ separately in the end. For ease of notation define 
\begin{align*}
 f(\lambda_{gr},\lambda_{vl}) &=\text{MSE}(\hbS_\RG)\\
 &=\frac{2m\lambda_{vl}(1-\lambda_{gr})+\lambda_{gr}(2m-1)}{(1-\lambda_{gr})(2m(1-\lambda_{vl})-1)}\Bigg(\E[V_1^2] \\
&\hspace{50pt}+\frac{(4m^2-1)(m+1)}{6(1-\lambda_{gr})(2m(1-\lambda_{vl})-1)}\Bigg)n
\end{align*}
which follows directly from \eqref{eq:RG_error}.
To minimize the error of the RG scheme we solve the following optimization problem,
\begin{equation}
\label{eq:proof_opt_1}
\begin{aligned}
& \underset{\lambda_{vl},\lambda_{gr}}{\text{minimize}} 
& & \hspace{-5pt} f(\lambda_{gr},\lambda_{vl}) \\
& \text{subject to}
& & \hspace{-5pt} \textstyle e^\ep\hspace{-2pt}=\hspace{-2pt} \max\hspace{-1pt}\bigg\{\frac{2m(k-1)(1-\lambda_{gr})(p_{\max}(2m(1-\lambda_{vl})-1)+\lambda_{vl})}{\lambda_{gr}(2m-1)},
\\
&&&\hspace{30pt}\textstyle \frac{\lambda_{gr}(2m-1)}{2m(k-1)(1-\lambda_{gr})(p_{\min}(2m(1-\lambda_{vl})-1)+\lambda_{vl})}\bigg\},\\
&&&\hspace{-5pt}\textstyle0\leq\lambda_{vl}<\frac{2m-1}{2m}, 0<\lambda_{gr}<1.
\end{aligned}
\end{equation}
		To solve it, we consider two optimization problems. Consider this first optimization problem, assume its optimal value is attained, and  let $\lambda_{gr}^{(1)}$ and $\lambda_{vl}^{(1)}$ be its optimal points,
\begin{equation}
\label{eq:proof_opt_2}
\begin{aligned}
& \textstyle\underset{\lambda_{vl},\lambda_{gr}}{\text{minimize}} 
& & \textstyle f(\lambda_{gr},\lambda_{vl}) \\
& \textstyle\text{subject to}
& & \textstyle e^\ep= \frac{2m(k-1)(1-\lambda_{gr})(p_{\max}(2m(1-\lambda_{vl})-1)+\lambda_{vl})}{\lambda_{gr}(2m-1)},\\
&&& \textstyle\frac{2m(k-1)(1-\lambda_{gr})(p_{\max}(2m(1-\lambda_{vl})-1)+\lambda_{vl})}{\lambda_{gr}(2m-1)}\\
&&&\hspace{15pt}\textstyle 
\geq
\textstyle\frac{\lambda_{gr}(2m-1)}{2m(k-1)(1-\lambda_{gr})(p_{\min}(2m(1-\lambda_{vl})-1)+\lambda_{vl})},\\
&&&\textstyle 0\leq\lambda_{vl}<\frac{2m-1}{2m}, 0<\lambda_{gr}<1.
\end{aligned}
\end{equation}
Consider this second optimization problem, assume its optimal value is attained, and  let $\lambda_{gr}^{(2)}$ and $\lambda_{vl}^{(2)}$ be its optimal points,
\begin{equation}
\label{eq:proof_opt_3}
\begin{aligned}
& \underset{\lambda_{vl},\lambda_{gr}}{\text{minimize}} 
& & f(\lambda_{gr},\lambda_{vl}) \\
& \text{subject to}
& & \textstyle e^\ep= \frac{\lambda_{gr}(2m-1)}{2m(k-1)(1-\lambda_{gr})(p_{\min}(2m(1-\lambda_{vl})-1)+\lambda_{vl})},\\
&&& \textstyle  \frac{\lambda_{gr}(2m-1)}{2m(k-1)(1-\lambda_{gr})(p_{\min}(2m(1-\lambda_{vl})-1)+\lambda_{vl})}\\
&&&\hspace{15pt}\textstyle 
\geq
\frac{2m(k-1)(1-\lambda_{gr})(p_{\max}(2m(1-\lambda_{vl})-1)+\lambda_{vl})}{\lambda_{gr}(2m-1)},\\
&&&0\leq\lambda_{vl}<1-\frac{1}{2m}, 0<\lambda_{gr}<1.
\end{aligned}
\end{equation}
Then, the solution of \eqref{eq:proof_opt_1} is $\min\left\{f\left(\lambda_{gr}^{(1)},\lambda_{vl}^{(1)}\right),f\left(\lambda_{gr}^{(2)},\lambda_{vl}^{(2)}\right)\right\}$. Therefore, to solve \eqref{eq:proof_opt_1}, we first solve \eqref{eq:proof_opt_2} and \eqref{eq:proof_opt_3}.

\begin{itemize}[wide, labelindent=0pt]
    \item \textit{{Solution of \eqref{eq:proof_opt_2}}:} 
Since $0\leq\lambda_{vl}<\frac{2m-1}{2m}$, we have the following two cases.
\begin{itemize}[leftmargin=+.3in]
    \item[$\circ$] If $e^{2\ep}<\frac{p_{\max}}{p_{\min}}$, from the first condition directly follows 
    \begin{equation}
    \label{eq:proof_opt_4}
    \textstyle
        \lambda_{gr}^{(1)}=\frac{2m(k-1)(p_{\max}(2m(1-\lambda_{vl})-1)+\lambda_{vl})}{2m(k-1)(p_{\max}(2m(1-\lambda_{vl})-1)+\lambda_{vl})+e^\ep}.
    \end{equation}

Since $e^{2\ep}<\frac{p_{\max}}{p_{\min}}$, then $(2m-1)(p_{\max}-p_{\min}e^{2\ep})>0$, and by substituting \eqref{eq:proof_opt_4} in the conditions of \eqref{eq:proof_opt_2}, we get 
    		$$\textstyle 0<\frac{(2m-1)p_{\max}-(2m-1)p_{\min}e^{2\ep}}{(1-2mp_{\min})e^{2\ep}+2mp_{\max}-1}\leq\lambda_{vl}<\frac{1}{2}.$$
    Since $f(\lambda_{gr}^{(1)},\lambda_{vl})$ is increasing in $\lambda_{vl}$, then its minimum is at achieved at the boundary of the domain, \Ie
    \begin{equation*}
    \textstyle
        \lambda_{vl}^{(1)}=\frac{(2m-1)p_{\max}-(2m-1)p_{\min}e^{2\ep}}{(1-2mp_{\min})e^{2\ep}+2mp_{\max}-1}.
    \end{equation*}
    Then substituting this back in \eqref{eq:proof_opt_4}, we get
    \begin{equation*}
    \textstyle
        \lambda_{gr}^{(1)}=\frac{2m(k-1)(p_{\max}-p_{\min})e^\ep}{2m(k-1)(p_{\max}-p_{\min})e^\ep+(1-2p_{\min})e^{2\ep}+2mp_{\max}-1}.
    \end{equation*}
    
    \item[$\circ$] If $e^{2\ep}\geq\frac{p_{\max}}{p_{\min}}$, then $(2m-1)(p_{\max}-p_{\min}e^{2\ep})\leq 0$, and 
   				$$ \textstyle \frac{(2m-1)p_{\max}-(2m-1)p_{\min}e^{2\ep}}{(1-2mp_{\min})e^{2\ep}+2mp_{\max}-1}\leq 0\leq\lambda_{vl}<\frac{1}{2}.$$
    Similarly, since  $f(\lambda_{gr}^{(1)},\lambda_{vl})$ is increasing in $\lambda_{vl}$, then 
    $
        \lambda_{vl}^{(1)}=0.
    $
    And we have that,
    \begin{equation*}
    \textstyle
        \lambda_{gr}^{(1)}=\frac{2m(k-1)p_{\max}}{2m(k-1)p_{\max}+e^\ep},
    \end{equation*}
\end{itemize}

\item \textit{{Solution of \eqref{eq:proof_opt_3}}:} The solution of \eqref{eq:proof_opt_3} follows similarly as that of \eqref{eq:proof_opt_2}, and we have the following two cases.
\begin{itemize}[leftmargin=+.3in]
    \item[$\circ$]If $e^{2\ep}<\frac{p_{\max}}{p_{\min}}$, 
    \begin{equation*}
    \textstyle
        \lambda_{vl}^{(2)}=\frac{(2m-1)p_{\max}-(2m-1)p_{\min}e^{2\ep}}{(1-2mp_{\min})e^{2\ep}+2mp_{\max}-1},
    \end{equation*}
    and
    \begin{equation*}
    \textstyle
        \lambda_{gr}^{(2)}=\frac{2m(k-1)(p_{\max}-p_{\min})e^\ep}{2m(k-1)(p_{\max}-p_{\min})e^\ep+(1-2p_{\min})e^{2\ep}+2mp_{\max}-1}.
    \end{equation*}
    
    \item[$\circ$] If $e^{2\ep}\geq\frac{p_{\max}}{p_{\min}}$, then 
    $
        \lambda_{vl}^{(2)}=0,
    $
    and 
    \begin{equation*}
    \textstyle
        \lambda_{gr}^{(2)}=\frac{2m(k-1)p_{\min}e^\ep}{2m(k-1)p_{\min}e^\ep+1}.
    \end{equation*}
\end{itemize}

\item \textit{Combining the two solutions:} We also have to look at the two cases separately as follows.
    \begin{itemize}[leftmargin=+.3in]
        \item[$\circ$] If $e^{2\ep}<\frac{p_{\max}}{p_{\min}}$, the solution is straightforward, and the optimal points for \eqref{eq:proof_opt_1} are $$\textstyle\lambda_{vl}^*=\frac{(2m-1)p_{\max}-(2m-1)p_{\min}e^{2\ep}}{(1-2mp_{\min})e^{2\ep}+2mp_{\max}-1},$$ and $$\textstyle\lambda_{gr}^*=\frac{2m(k-1)(p_{\max}-p_{\min})e^\ep}{2m(k-1)(p_{\max}-p_{\min})e^\ep+(1-2p_{\min})e^{2\ep}+2mp_{\max}-1}.$$

        \item[$\circ$] If $e^{2\ep}\geq\frac{p_{\max}}{p_{\min}}$, we need to compare $f\left(\lambda_{vl}^{(1)},\lambda_{gr}^{(1)}\right)$ and $f\left(\lambda_{vl}^{(2)},\lambda_{gr}^{(2)}\right)$, and one readily obtains
            \begin{align*}
               \textstyle f\left(\lambda_{vl}^{(1)},\lambda_{gr}^{(1)}\right)\leq f\left(\lambda_{vl}^{(2)},\lambda_{gr}^{(2)}\right).
            \end{align*}
            Therefore, for this case, the optimal points for \eqref{eq:proof_opt_1} are $\lambda_{vl}^*=0$ and $\lambda_{gr}^*=\frac{2m(k-1)p_{\max}}{2m(k-1)p_{\max}+e^\ep}$.
    \end{itemize}
\end{itemize}

This completes the proof for $p_{\max}>\frac{1}{2m}$. If $p_{\max}=p_{\min}=\frac{1}{2m}$, then the second condition of \eqref{eq:proof_opt_1} reduces to $$\textstyle e^\ep=\max\left\{\frac{(k-1)(1-\lambda_{gr})}{\lambda_{gr}},\frac{\lambda_{gr}}{(k-1)(1-\lambda_{gr})}\right\}.$$ 
\Ie $e^\ep$ is not a function of $\lambda_{vl}$. 
Therefore, for this case, the optimal points for \eqref{eq:proof_opt_1} can be readily obtained such that $\lambda_{vl}^*=0$ and $\lambda_{gr}^*=\frac{k-1}{k-1+e^\ep}$.
Combining all the described cases completes the proof.

\begin{remark}
In Corollary \ref{cor:minimum_RG}, we minimize the relative error subject to a fixed privacy parameter $\ep$. Because of the monotonicity of the relative error as a function of $\ep$, an increase in privacy, \Ie smaller $\ep$, cannot decrease the error. Thus, minimizing the error subject to
\begin{multline*}
   \textstyle  e^\ep\leq \max\bigg\{\frac{2m(k-1)(1-\lambda_{gr})(p_{\max}(2m(1-\lambda_{vl})-1)+\lambda_{vl})}{\lambda_{gr}(2m-1)},\\
   \textstyle \frac{\lambda_{gr}(2m-1)}{2m(k-1)(1-\lambda_{gr})(p_{\min}(2m(1-\lambda_{vl})-1)+\lambda_{vl})}\bigg\},
\end{multline*}
is equivalent to solving the optimization \eqref{eq:proof_opt_1}.
\end{remark}


\section{Comparison: Proof of Theorem \ref{thm:QA_vs_RG}}
\label{app:comp_A_C}
We start by sketching the proof of (i) in Theorem \ref{thm:QA_vs_RG}. 
\begin{itemize}
    \item For $k=2$, $m=1$, and $p_1(v)\neq p_2(v') $  or $p_1(v)= p_2(v)=0.5 $ for all $v,v'\in\cV=\{-1,1\}$, we can easily find the exact value of $\lambda$ that satisfies \eqref{eq:priv_C_general}; therefore, we can find the expression for the error of the Q\&A scheme $\Er_\QA(\ep,b)$. Moreover, the minimum error of the RG scheme $\Er_\RG(\ep,b)$ follows from Corollary \ref{cor:minimum_RG}. We find that the limit of the difference of the  errors, $\Er_\RG(\ep,b)-\Er_\QA(\ep,b)$, as $\ep$ goes to zero, is positive. 
    \item For $k>2$ and $m=1$ or $k\geq2$ and $m>1$, the minimum error of the RG scheme $\Er_\RG(\ep,b)$ follows from Corollary \ref{cor:minimum_RG}.
    From Remark \ref{rem:choice_of_rho}, to guarantee a required privacy $\ep$, we can choose any $\lambda\geq\frac{(2m-1)p_{\max}-p_{\min}e^\ep}{2m(p_{\max}-p_{\min}e^\ep)+e^\ep -1}$. We use this $\lambda$ to bound the error of the Q\&A scheme. Finally,  we find that the bound on the limit of the difference of the  errors, $\Er_\RG(\ep,b)-\Er_\QA(\ep,b)$,  as $\ep$ goes to zero, is positive.
\end{itemize}
 
 Now we prove (ii) of Theorem \ref{thm:QA_vs_RG} by showing that there exists an $\ep_\ell>0$, such that for all $\ep>\ep_\ell$, we have $\Er_\QA(\ep,b)>\Er_\RG(\ep,b).$
We first consider the Q\&A scheme. From Remark \ref{rem:choice_of_rho}, there exits an $\ep_0>0$, such that for all $\ep>\ep_0$, the parameter $\lambda=0$ guarantees privacy level $\ep_0$. 
And the error of the Q\&A scheme, as defined in \eqref{eq:QA_ER_ep_B}, for $\lambda=0$, \Ie all $\ep>\ep_0$, is
\begin{align*}
    \Er_\QA(\ep,b)=\frac{1}{6b}\log(2m)(2m+1)(m+1)\left(k-1\right)>0.
\end{align*}
Let $\ep_\ell>\ep_0>\sqrt{\frac{p_{\max}}{p_{\min}}}$, then from Corollary \ref{cor:minimum_RG},  the parameters $\lambda_{vl}^*=0$ and $\lambda_{gr}^*=\frac{2m(k-1)p_{\max}}{2m(k-1)p_{\max}+e^{\ep_\ell}}$ minimize the error of the RG scheme. 
Thus, there exists $\ep_\ell$, such that for all $\ep>\ep_\ell$, 
\begin{align*}
    \Er_\QA(\ep_\ell,b)-\Er_\RG(\ep_\ell,b)&=\frac{\log(2m)(2m+1)(m+1)\left(k-1\right)}{6b}
    \\&>0,
\end{align*}
which completes the proof.

\section*{Acknowledgment}
The authors would like to thank Peter Kairouz for helpful discussions.

\bibliographystyle{IEEEtran}
\IEEEtriggeratref{16}
\bibliography{bibfile}

\end{document}